\documentclass[11pt]{article}
\usepackage{amsfonts,amsmath,amsthm,amssymb}

\usepackage{comment, enumerate}
\usepackage[margin=1in]{geometry}
\usepackage{soul,color}
\usepackage{mathrsfs}
\usepackage{mathdots}
\usepackage[linesnumbered,boxed,ruled,vlined]{algorithm2e}

\usepackage{bbm}

\usepackage{url}

\usepackage{tabu}

\usepackage{xfrac}

\usepackage[normalem]{ulem}

\newcommand{\poly}{\mathop{\mathrm{poly}}}

\newcommand{\sr}{\mathop{\operatorname{S}}}
\newcommand{\xr}{\mathop{\operatorname{S_x}}}
\newcommand{\yr}{\mathop{\operatorname{S_y}}}
\newcommand{\zr}{\mathop{\operatorname{S_z}}}
\newcommand{\rr}{\mathop{\operatorname{R}}}

\newcommand{\F}{\mathbb{F}}
\newcommand{\R}{\mathbb{R}}

\newcommand{\N}{\mathbb{N}}

\theoremstyle{plain}\newtheorem{theorem}{Theorem}[section]
\newtheorem{lemma}[theorem]{Lemma}
\newtheorem{proposition}[theorem]{Proposition}
\newtheorem{corollary}[theorem]{Corollary}

\newtheorem{fact}[theorem]{Fact}
\newtheorem{remark}[theorem]{Remark}
\theoremstyle{definition}
\newtheorem{definition}[theorem]{Definition}

\newtheorem{question}[theorem]{Question}

\usepackage[utf8]{inputenc}

\title{Limits on the Universal Method for Matrix Multiplication}
\author{Josh Alman\footnote{MIT CSAIL and EECS, jalman@mit.edu. Supported by two NSF Career Awards.}}

\begin{document}

\maketitle

\begin{abstract}
In this work, we prove limitations on the known methods for designing matrix multiplication algorithms. Alman and Vassilevska Williams~\cite{aw} recently defined the \emph{Universal Method}, which substantially generalizes all the known approaches including Strassen's Laser Method~\cite{strassenlaser1} and Cohn and Umans' Group Theoretic Method~\cite{cohn2003group}. We prove concrete lower bounds on the algorithms one can design by applying the Universal Method to many different tensors. Our proofs use new tools for upper bounding the \emph{asymptotic slice rank} of a wide range of tensors. Our main result is that the Universal method applied to any Coppersmith-Winograd tensor $CW_q$ cannot yield a bound on $\omega$, the exponent of matrix multiplication, better than $2.16805$. By comparison, it was previously only known that the weaker `Galactic Method' applied to $CW_q$ could not achieve an exponent of $2$.

We also study the Laser Method (which is, in principle, a highly special case of the Universal Method) and prove that it is ``complete'' for matrix multiplication algorithms: when it applies to a tensor $T$, it achieves $\omega = 2$ if and only if it is possible for the Universal method applied to $T$ to achieve $\omega = 2$. Hence, the Laser Method, which was originally used as an algorithmic tool, can also be seen as a lower bounding tool. For example, in their landmark paper, Coppersmith and Winograd~\cite{coppersmith} achieved a bound of $\omega \leq 2.376$, by applying the Laser Method to $CW_q$. By our result, the fact that they did not achieve $\omega=2$ \emph{implies} a lower bound on the Universal Method applied to $CW_q$. Indeed, if it were possible for the Universal Method applied to $CW_q$ to achieve $\omega=2$, then  Coppersmith and Winograd's application of the Laser Method would have achieved $\omega=2$.
\end{abstract}

\thispagestyle{empty}
\newpage
\setcounter{page}{1}

\section{Introduction}

One of the biggest open questions in computer science asks how quickly one can multiply two matrices. Progress on this problems is measured by giving bounds on $\omega$, the \emph{exponent of matrix multiplication}, defined as the smallest real number such that two $n \times n$ matrices over a field can be multiplied using $n^{\omega + o(1)}$ field operations. Since Strassen's breakthrough algorithm~\cite{strassen} showing that $\omega \leq \log_2(7) \approx 2.81$, there has been a long line of work, resulting in the current best bound of $\omega \leq 2.3729$~\cite{v12,legall}, and it is popularly conjectured that $\omega=2$.

The key to Strassen's algorithm is an algebraic identity showing how $2 \times 2 \times 2$ matrix multiplication can be computed surprisingly efficiently (in particular, Strassen showed that the $2 \times 2 \times 2$ matrix multiplication tensor has rank at most $7$; see Section~\ref{sectwo} for precise definitions). Arguing about the ranks of larger matrix multiplication tensors has proven to be quite difficult -- in fact, even the rank of the $3 \times 3 \times 3$ matrix multiplication tensor isn't currently known. Progress on bounding $\omega$ since Strassen's algorithm has thus taken the following approach: Pick a tensor (trilinear form) $T$, typically not a matrix multiplication tensor, such that
\begin{itemize}
    \item Powers $T^{\otimes n}$ of $T$ can be efficiently computed (i.e. $T$ has low asymptotic rank), and
    \item $T$ is useful for performing matrix multiplication, since large matrix multiplication tensors can be `embedded' within powers of $T$.
\end{itemize}
Combined, these give an upper bound on the rank of matrix multiplication itself, and hence $\omega$.

The most general type of embedding which is known to preserve the ranks of tensors as required for the above approach is a \emph{degeneration}. In~\cite{aw}, the author and Vassilevska Williams called this method of taking a tensor $T$ and finding the best possible degeneration of powers $T^{\otimes n}$ into matrix multiplication tensors the \emph{Universal Method applied to $T$}, and the best bound on $\omega$ which can be proved in this way is written $\omega_u(T)$. They also defined two weaker methods: \emph{the Galactic Method applied to $T$}, in which the `embedding' must be a more restrictive \emph{monomial degeneration}, resulting in the bound $\omega_g(T)$ on $\omega$, and \emph{the Solar Method applied to $T$}, in which the `embedding' must be an even more restrictive \emph{zeroing out}, resulting in the bound $\omega_s(T)$ on $\omega$. Since monomial degenerations and zeroing outs are successively more restrictive types of degenerations, we have that for all tensors $T$, $$\omega \leq \omega_u(T) \leq \omega_g(T) \leq \omega_s(T).$$

These methods are \emph{very general}; there are no known methods for computing $\omega_u(T)$, $\omega_g(T)$, or $\omega_s(T)$ for a given tensor $T$, and these quantities are even unknown for very well-studied tensors $T$. The two main approaches to designing matrix multiplication algorithms are the Laser Method of Strassen~\cite{strassenlaser1} and the Group-Theoretic Method of Cohn and Umans~\cite{cohn2003group}. Both of these approaches show how to give upper bounds on $\omega_s(T)$ for particular structured tensors $T$ (and hence upper bound $\omega$ itself). In other words, they both give ways to find zeroing outs of tensors into matrix multiplication tensors, but not necessarily the best zeroing outs. In fact, it is known that the Laser Method does not always give the best zeroing out for a particular tensor $T$, since the improvements from~\cite{coppersmith} to later works~\cite{stothers,v12,legall} can be seen as giving slight improvements to the Laser Method to find better and better zeroing outs\footnote{These works apply the Laser Method to higher powers of the tensor $T = CW_q$, a technique which is still captured by the Solar Method.}. The Group-Theoretic Method, like the Solar Method, is very general, and it is not clear how to optimally apply it to a particular group or family of groups.

All of the improvements on bounding $\omega$ for the past 30+ years have come from studying the Coppersmith-Winograd family of tensors $\{CW_q\}_{q \in \N}$. The Laser Method applied to powers of $CW_5$ gives the bound $\omega_s(CW_5) \leq 2.3729$. The Group-Theoretic Method can also prove the best known bound $\omega \leq 2.3729$, by simulating the Laser Method analysis of $CW_q$ (see e.g. \cite{almanitcs} for more details). Despite a long line of work on matrix multiplication, there are no known tensors\footnote{The author and Vassilevska Williams~\cite{aw} study a generalization of $CW_q$ which can tie the best known bound, but its analysis is identical to that of $CW_q$. Our lower bounds in this paper will apply equally well to this generalized class as to $CW_q$ itself.} which seem to come close to achieving the bounds one can obtain using $CW_q$. This leads to the first main question of this paper: 
\begin{question}\label{q1}
How much can we improve our bound on $\omega$ using a more clever analysis of the Coppersmith-Winograd tensor?
\end{question}

The author and Vassilevska Williams~\cite{aw} addressed this question by showing that there is a constant $c>2$ so that for all $q$, $\omega_g(CW_q) > c$. In other words, the Galactic Method (monomial degenerations) cannot be used with $CW_q$ to prove $\omega=2$. However, this leaves open a number of important questions: How close to $2$ can we get using monomial degenerations; could it be that $\omega_g(CW_q) \leq 2.1$? Perhaps more importantly, what if we are allowed to use arbitrary degenerations; could it be that $\omega_u(CW_q) \leq 2.1$, or even $\omega_u(CW_q) = 2$?

The second main question of this paper concerns the Laser Method. The Laser Method upper bounds $\omega_s(T)$ for any tensor $T$ with certain structure (which we describe in detail in Section~\ref{lasersec}), and has led to every improvement on $\omega$ since its introduction by Strassen~\cite{strassenlaser1}.
\begin{question}\label{q2}
When the Laser Method applies to a tensor $T$, how close does it come to optimally analyzing $T$?
\end{question}
As discussed, we know the Laser Method does not always give a tight bound on $\omega_s(T)$. For instance, Coppersmith-Winograd~\cite{coppersmith} applied the Laser Method to $CW_q$ to prove $\omega_s(CW_q) \leq 2.376$, and then later work~\cite{stothers,v12,legall} analyzed higher and higher powers of $CW_q$ to show $\omega_s(CW_q) \leq 2.373$. Ambainis, Filmus and Le Gall~\cite{ambainis} showed that analyzing higher and higher powers of $CW_q$ itself with the Laser Method cannot yield an upper bound better than $\omega_s(CW_q) \leq 2.3725$. What about for other tensors? Could there be a tensor such that applying the Laser Method to $T$ yields $\omega_s(T) \leq c$ for some $c > 2$, but applying the Laser Method to high powers $T^{\otimes n}$ of $T$ yields $\omega_s(T) = 2$? Could applying an entirely different method to such a $T$, using arbitrary degenerations and not just zeroing outs, show that $\omega_u(T) = 2$?

\subsection{Our Results}

We give strong resolutions to both Question~\ref{q1} and Question~\ref{q2}.

\paragraph{Universal Method Lower Bounds}
To resolve Question~\ref{q1}, we prove a new lower bound for the Coppersmith-Winograd tensor:

\begin{theorem} \label{mainthm}
$\omega_u(CW_q) \geq 2.16805$ for all $q$.
\end{theorem}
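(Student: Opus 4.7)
The obstruction I would use is the \emph{asymptotic subrank} $\tilde{Q}(T) := \lim_{n\to\infty} Q(T^{\otimes n})^{1/n}$, which is multiplicative under tensor product, monotone under arbitrary (degenerate) restriction, and lower-bounded on matrix multiplication tensors by Strassen's asymptotic sum inequality: $\tilde{Q}(\langle m,m,m\rangle) \ge m^{2}$. Any sequence of degenerations $T^{\otimes N_k} \unrhd \langle m_k, m_k, m_k\rangle$ witnessing $\omega_u(T) \le c$ therefore satisfies both $m_k^{c} \le R(T)^{N_k}$ (from the algorithmic side, since $R(\langle m,m,m\rangle) \ge m^{\omega}$) and $m_k^{2} \le \tilde{Q}(T)^{N_k}$ (from monotonicity of $\tilde{Q}$), and these combine into the structural inequality
\[
\omega_u(T) \;\ge\; \frac{2\log R(T)}{\log \tilde{Q}(T)}.
\]
Since $R(CW_q) \le q+2$ is classical, Theorem~\ref{mainthm} reduces to showing the upper bound $\tilde{Q}(CW_q) \le (q+2)^{2/2.16805}$ for every $q \in \N$.

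\paragraph{Bounding $\tilde{Q}(CW_q)$.} For the upper bound I would use the slice-rank/polynomial method, adapted to the graded structure of $CW_q$. Each variable carries a natural weight in $\{0,1,2\}$ (with $x_0, y_0, z_0$ of weight $0$; $x_i, y_i, z_i$ of weight $1$ for $1 \le i \le q$; and $x_{q+1}, y_{q+1}, z_{q+1}$ of weight $2$), and every active monomial has total weight exactly $2$. Consequently $CW_q^{\otimes n}$ decomposes as a direct sum of ``type blocks'' indexed by triples $(\alpha,\beta,\gamma)$ of empirical weight distributions on the three axes. Bounding the slice rank of each type block via the Tao / Ellenberg--Gijswijt / Sawin polynomial method produces an entropy-style upper bound, and combining these bounds across all types by a $\tau$-theorem style computation yields $\tilde{Q}(CW_q) \le F(q)$, where $F(q)$ is the optimum of an explicit concave program over probability distributions $(\alpha,\beta,\gamma)$ subject to the weight-sum-$2$ constraint.

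\paragraph{Finishing and obstacles.} The final step is a numerical verification that $2\log(q+2)/\log F(q) \ge 2.16805$ for every $q$, with the extremal $q$ plausibly small and large $q$ handled by monotonicity. The main obstacle I anticipate is the second step: the trivial bound $\tilde{Q}(CW_q) \le R(CW_q) \le q+2$ recovers only $\omega_u \ge 2$, and a naive cap-set-style slice-rank computation typically loses too much at the weight-$0$ and weight-$2$ ``corner'' variables to reach $2.16805$. Producing a sharp enough bound requires a tailored slice-rank analysis exploiting the interaction among the weight-$0$, weight-$1$, and weight-$2$ blocks of $CW_q$; and because the Universal Method permits arbitrary degenerations (not just zeroing outs, as in the Galactic/Solar bounds of~\cite{aw}), the argument must genuinely be phrased in terms of a subrank-type quantity that is monotone under the full class of algebraic degenerations, rather than slice rank itself.
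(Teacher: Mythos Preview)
Your plan is essentially the paper's proof, modulo one misconception: slice rank \emph{is} monotone under arbitrary degenerations, not just monomial ones (this is \cite[Corollary~2]{tao2}, recorded here as Proposition~\ref{propone}). The earlier Galactic-only bounds of~\cite{aw} were limited because they used the asymptotic \emph{independence number} $\tilde{I}$, which is only monotone under monomial degenerations, not because they used slice rank. Accordingly the paper works directly with $\tilde{S}$: it proves $\omega_u(T) \ge 2\log\tilde{R}(T)/\log\tilde{S}(T)$ for variable-symmetric $T$ (Theorem~\ref{thm:symmSOmega}), upper-bounds $\tilde{S}(CW_q)$ via exactly the weight-$\{0,1,2\}$ block decomposition and entropy optimization you describe (Theorem~\ref{two} and Proposition~\ref{symmprop} applied in Section~\ref{secgcw}), and finishes numerically (Appendix~\ref{appone}). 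One small slip: you need the \emph{lower} bound $\tilde{R}(CW_q) \ge q+2$ in the numerator of the structural inequality, not the upper bound $R(CW_q) \le q+2$ you stated.

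Your $\tilde{Q}$-based phrasing also works, since $\tilde{Q} \le \tilde{S}$ always, and is in fact the route taken in the concurrent work of Christandl, Vrana, and Zuiddam~\cite{theirpreprint} via Strassen's asymptotic spectrum. The paper shows (Corollary~\ref{subrankcorr}) that $\tilde{Q}(CW_q) = \tilde{S}(CW_q)$ for every laser-ready tensor, so the two routes give identical constants; the slice-rank route is simply more direct once one knows Proposition~\ref{propone}, and avoids the heavier spectral machinery.
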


In other words, no analysis of $CW_q$, using any techniques within the Universal Method, can prove a bound on $\omega$ better than $2.16805$. This generalizes the main result of~\cite{aw} from the Galactic method to the \emph{Universal} method, and gives a more concrete lower bound, increasing the bound from `a constant greater than $2$' to $2.16805$. We also give stronger lower bounds for particular tensors in the family. For instance, for the specific tensor $CW_5$ which yields the current best bound on $\omega$, we show $\omega_u(CW_5) \geq 2.21912\ldots$.

Our proof of Theorem~\ref{mainthm} proceeds by upper bounding $\tilde{S}(CW_q)$, the \emph{asymptotic slice rank} of $CW_q$. The slice rank of a tensor, denoted $S(T)$, was first introduced by Blasiak et al.~\cite{blasiak} in the context of lower bounds against the Group-Theoretic Method. In order to study degenerations of \emph{powers} of tensors, rather than just tensors themselves, we need to study an \emph{asymptotic} version of slice rank, $\tilde{S}$. This is important since the slice rank of a product of two tensors can be greater than the product of their slice ranks, and as we will show, $S(CW_q^{\otimes n})$ is much greater than $S(CW_q)^n$ for big enough $n$.

We will give three different tools for proving upper bounds on $\tilde{S}(T)$ for many different tensors $T$. These, combined with the known connection, that upper bounds on the slice rank of $T$ yield lower bounds on $\omega_u(T)$, will imply our lower bound for $CW_q$ as well as many other tensors of interest, including: the same lower bound $\omega_u(CW_{q,\sigma}) \geq 2.16805$ for any \emph{generalized Coppersmith-Winograd tensor} $CW_{q,\sigma}$ as introduced in~\cite{aw}, a similar lower bound for $cw_{q,\sigma}$, the generalized `simple' Coppersmith-Winograd tensor missing its `corner terms', and a lower bound for $T_q$, the structural tensor of the cyclic group $C_q$, matching the lower bounds obtained by \cite{almanitcs, blasiak}.
In Section~\ref{secfour} we give tables of our precise lower bounds for these and other tensors.

The Galactic Method lower bounds of~\cite{aw} were proved by introducing a suite of tools for giving upper bounds on $\tilde{I}(T)$, the \emph{asymptotic independence number} (sometimes also called the `galactic subrank' or the `monomial degeneration subrank') for many tensors $T$. We will show that our new tools are able to prove at least as high a lower bound on $\tilde{S}(T)$ as the tools of~\cite{aw} can prove on $\tilde{I}(T)$. We thus show that all of those previously known Galactic Method lower bounds hold for the Universal Method as well.

We also show how our slice rank lower bounds can be used to study other properties of tensors. Coppersmith and Winograd~\cite{coppersmith} introduced the notion of the \emph{value} $V_\tau(T)$ of a tensor $T$, which is useful when applying the Laser Method to a larger tensor $T'$ which contains $T$ as a subtensor. We show how our slice rank lower bounding tools yield a tight upper bound on the value of $t_{112}$, the notorious subtensor of $CW_q^{\otimes 2}$ which arises when applying the Laser Method to powers of $CW_q$. Although the value $V_\tau(t_{112})$ appears in every analysis of $CW_q$ since~\cite{coppersmith}, including~\cite{stothers,v12,legall,legallrect,legallrect2}, the best lower bound on it has not improved since~\cite{coppersmith}, and our new upper bound here helps explain why. See Sections~\ref{subsec:value}~and~\ref{t112} for more details.

We briefly note that our lower bound of $2.16805 > 2 + \frac16$ in Theorem~\ref{mainthm} may be significant when compared to the recent algorithm of Cohen, Lee and Song~\cite{cohen2018solving} which solves $n$-variable linear programs in time about $O(n^\omega + n^{2 + 1/6})$.

\paragraph{The Laser Method is ``Complete''}
We also show that for a wide class of tensors $T$, including $CW_q$, $cw_q$, $T_q$, and all the other tensors we study in Section~\ref{secfour}, our tools are tight, meaning they not only give an upper bound on $\tilde{S}(T)$, but they also give a matching lower bound. Hence, for these tensors $T$, no better lower bound on $\omega_u(T)$ is possible by arguing only about $\tilde{S}(T)$. 

The tensors we prove this for are what we call \emph{laser-ready} tensors -- tensors to which the Laser Method (as used by \cite{coppersmith} on $CW_q$) applies; see Definition~\ref{laserable} for the precise definition. Tensors need certain structure to be laser-ready, but tensors $T$ with this structure are essentially the only ones for which successful techniques for upper bounding $\omega_u(T)$ are known. In fact, every record-holding tensor in the history of matrix multiplication algorithm design has been laser-ready.

We show that for any laser-ready tensor $T$, the Laser Method can be used to construct a degeneration from $T^{\otimes n}$ to an independent tensor of size $\Lambda^{n - o(n)}$, where $\Lambda$ is the upper bound on $\tilde{S}(T)$ implied by one of our tools, Theorem~\ref{two}. Combined, these imply that $\tilde{S}(T) = \Lambda$, showing that the lower bound from Theorem~\ref{two} is tight. This gives an intriguing answer to Question~\ref{q2}:

\begin{theorem} \label{introlaseropt}
If $T$ is a laser-ready tensor, and the Laser Method applied to $T$ yields the bound $\omega_u(T) \leq c$ for some $c>2$, then $\omega_u(T)>2$.
\end{theorem}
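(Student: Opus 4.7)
My plan is to combine the tight upper and lower bounds on $\tilde{S}(T)$ established earlier in the paper. First, recall that for a laser-ready tensor $T$, the paper establishes $\tilde{S}(T) = \Lambda$ exactly, where $\Lambda$ is the quantity produced by Theorem~\ref{two}: the inequality $\tilde{S}(T) \leq \Lambda$ comes from that theorem, while the Laser Method construction on $T$ supplies a matching lower bound by degenerating $T^{\otimes n}$ into an independent tensor of size $\Lambda^{n - o(n)}$. This tight value of $\tilde{S}(T)$ is the central gadget of the proof.

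Next, I would reinterpret the Laser Method on $T$ as two coupled extractions. In one mode, the laser construction extracts an independent tensor of size $\Lambda^{n-o(n)}$; in another mode, essentially the same combinatorial selection of subtensors produces a direct sum of matrix multiplication tensors $\bigoplus_i \langle a_i, b_i, c_i\rangle$ whose shapes are determined by the laser distributions. Feeding this direct sum into Sch\"onhage's asymptotic sum inequality $\sum_i (a_i b_i c_i)^{\omega/3} \leq \tilde{R}(T)^n$ yields the laser upper bound $c$ on $\omega_u(T)$. The key observation is that both extractions are parameterized by the \emph{same} data (the laser distributions on $T$'s variable blocks together with $\tilde{R}(T)$), so the laser bound $c$ and the slice rank value $\Lambda = \tilde{S}(T)$ satisfy a single, tight relation.

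For the lower bound step, I would use the standard slice rank obstruction to $\omega_u$: any degeneration $T^{\otimes n} \unrhd \bigoplus_i \langle a_i, b_i, c_i\rangle$ satisfies $\sum_i \min(a_i b_i, a_i c_i, b_i c_i) \leq S(T^{\otimes n}) \leq \Lambda^{n + o(n)}$, and since $(a_i b_i c_i)^{2/3} \geq \min(a_i b_i, a_i c_i, b_i c_i)$, this bound together with Sch\"onhage's inequality forces $\omega_u(T) \geq \omega_\ast$, where $\omega_\ast$ is exactly the optimum determined by the laser relation from the previous paragraph. Thus if the Laser Method gives $c > 2$, the slice rank lower bound delivers $\omega_u(T) \geq \omega_\ast = c > 2$, which is the contrapositive-free statement of the theorem.

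The main obstacle I expect is the second step: formalizing that the Laser Method's numerical upper bound on $\omega$ really agrees with the lower bound on $\omega_u(T)$ coming from $\tilde{S}(T) \leq \Lambda$. Concretely, one must inspect the Coppersmith--Winograd--style analysis of a laser-ready $T$ and track how the tensor's marginal/entropic data simultaneously determines (i) the size of the independent tensor saturating Theorem~\ref{two} and (ii) the shape distribution of matrix multiplications extracted from $T^{\otimes n}$. The laser-ready hypothesis (Definition~\ref{laserable}) should guarantee that these two accountings rely on \emph{identical} optimization variables so that no slack remains; verifying this matching -- and checking that the Laser Method's optimization does not lose a factor when $\omega$ approaches $2$ -- is the technical heart of the proof.
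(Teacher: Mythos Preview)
Your overall architecture is right: establish $\tilde{S}(T)=\Lambda$ exactly (upper bound from Theorem~\ref{two}, lower bound from the Laser construction), then feed this into the slice-rank obstruction to $\omega_u$. That is precisely what the paper does. The gap is in your third paragraph, where you assert that the slice-rank lower bound $\omega_\ast$ \emph{equals} the Laser upper bound $c$. This is false in general, and you should not aim for it. Concretely, the Laser Method on $T$ produces a degeneration $T^{\otimes n}\unrhd N\odot\langle a,a,a\rangle$ with $Na^{2}=p_X^{\,n}=\Lambda^n$, and the resulting upper bound is the $c$ solving $Na^{c}=\tilde{R}(T)^n$. The slice-rank lower bound from Theorem~\ref{thm:symmSOmega} is $2\log\tilde{R}(T)/\log\Lambda$. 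These two numbers are determined by the \emph{same} data $(N,a,\tilde{R}(T))$ but are not the same function of that data; for $CW_5$ the Laser bound is about $2.373$ while the slice-rank lower bound is about $2.219$. So the ``single, tight relation'' you hope for does not pin down $\omega_u(T)$ to the value $c$, and the technical obstacle you flag in your last paragraph is genuinely insurmountable in that form.

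The fix, and what the paper actually does, is to aim lower: you only need $\omega_u(T)>2$, not $\omega_u(T)\geq c$. The crucial (and easy) observation is that the Laser inequality $Na^{\omega}\geq \tilde{R}(T)^n$ specializes at $\omega=2$ to $p_X\geq \tilde{R}(T)$. Hence the Laser Method yields $c>2$ exactly when $\Lambda=\sup_p p_X<\tilde{R}(T)$. Since you already know $\tilde{S}(T)=\Lambda$, this gives $\tilde{S}(T)<\tilde{R}(T)$, and now Theorem~\ref{thm:symmSOmega} (or Theorem~\ref{thm:SROmega}) immediately gives $\omega_u(T)>2$. No matching of optima is needed; the argument only hinges on whether $\Lambda$ falls strictly below $\tilde{R}(T)$.
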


To reiterate: If $T$ is any tensor to which the Laser Method applies (as in Definition~\ref{laserable}), and the Laser Method does not yield $\omega=2$ when applied to $T$, then in fact $\omega_u(T)>2$, and even the substantially more general Universal method applied to $T$ cannot yield $\omega=2$. Hence, the Laser Method, which was originally used as an algorithmic tool, can also be seen as a lower bounding tool. Conversely, Theorem~\ref{introlaseropt} shows that the Laser Method is ``complete'', in the sense that it cannot yield a bound on $\omega$ worse than $2$ when applied to a tensor which is able to prove $\omega=2$.

Theorem~\ref{introlaseropt} explains and generalizes a number of phenomena:
\begin{itemize}
    \item The fact that Coppersmith-Winograd~\cite{coppersmith} applied the Laser method to the tensor $CW_q$ and achieved an upper bound greater than $2$ on $\omega$ \emph{implies that} $\omega_u(CW_q) > 2$, and no \emph{arbitrary degeneration} of powers of $CW_q$ can yield $\omega = 2$.
    
    \item As mentioned above, it is known that applying the Laser method to higher and higher powers of a tensor $T$ can successively improve the resulting upper bound on $\omega$. Theorem~\ref{introlaseropt} shows that if the Laser method applied to the first power of any tensor $T$ did not yield $\omega=2$, then this sequence of Laser method applications (which is a special case of the Universal method) must converge to a value greater than $2$ as well. This generalizes the result of Ambainis, Filmus and Le Gall~\cite{ambainis}, who proved this about applying the Laser Method to higher and higher powers of the specific tensor $T = CW_q$.
    
    \item Our result also generalizes the result of Kleinberg, Speyer and Sawin~\cite{kleinberg}, where it was shown that (what can be seen as) the Laser method achieves a tight lower bound on $\tilde{S}(T_q^{lower})$, matching the upper bound of Blasiak et al.~\cite{blasiak}. Indeed, $T_q^{lower}$, the lower triangular part of $T_q$, is a laser-ready tensor.
\end{itemize}

Our proof of Theorem~\ref{introlaseropt} also sheds light on a notion related to the asymptotic slice rank $\tilde{S}(T)$ of a tensor $T$, called the \emph{asymptotic subrank} $\tilde{Q}(T)$ of $T$. $\tilde{Q}$ is a ``dual'' notion of asymptotic rank, and it is important in the definition of Strassen's asymptotic spectrum of tensors~\cite{strassenlaser1}. 

It is not hard to see (and follows, for instance, from Propositions~\ref{propone}~and~\ref{proptwo} below) that $\tilde{Q}(T) \leq \tilde{S}(T)$ for all tensors $T$. However, there are no known separations between the two notions; whether there exists a tensor $T$ such that $\tilde{Q}(T) < \tilde{S}(T)$ is an open question. As a Corollary of Theorem~\ref{introlaseropt}, we prove:
\begin{corollary} \label{introsubrankcor}
Every laser-ready tensor $T$ has $\tilde{Q}(T) = \tilde{S}(T)$.
\end{corollary}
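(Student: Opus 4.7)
The plan is to observe that Corollary~\ref{introsubrankcor} is an essentially immediate consequence of the machinery already assembled for Theorem~\ref{introlaseropt}, once one unpacks what asymptotic subrank means. Recall that $\tilde{Q}(T)$ is defined so that $\tilde{Q}(T) \geq r$ precisely when, for every $\eps > 0$ and all sufficiently large $n$, the tensor $T^{\otimes n}$ degenerates to an independent (diagonal) tensor of size at least $r^{(1-\eps)n}$. Thus to prove $\tilde{Q}(T) = \tilde{S}(T)$, I need a matching upper and lower bound, and both are already in hand.

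First I would record the easy direction $\tilde{Q}(T) \leq \tilde{S}(T)$, which the paper notes follows from Propositions~\ref{propone}~and~\ref{proptwo}: an independent tensor of size $N$ has slice rank exactly $N$, and slice rank can only decrease under degeneration, so any degeneration of $T^{\otimes n}$ to such a diagonal tensor of size $N$ forces $S(T^{\otimes n}) \geq N$, whence $\tilde{S}(T) \geq \tilde{Q}(T)$. This direction requires no assumption of laser-readiness.

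Next I would invoke the content of Theorem~\ref{two}, which gives, for every laser-ready $T$, an explicit value $\Lambda$ with $\tilde{S}(T) \leq \Lambda$. The key lower-bounding ingredient, stated in the discussion preceding Theorem~\ref{introlaseropt}, is that for a laser-ready $T$ the Laser Method actually produces, for all sufficiently large $n$, a degeneration from $T^{\otimes n}$ to an independent tensor of size $\Lambda^{n - o(n)}$. (In the standard Laser-Method pipeline this is precisely the diagonal block that is then ``rounded'' into a matrix multiplication tensor; here we simply do not perform that last step and keep it as an independent tensor.) By the definition of asymptotic subrank, this yields $\tilde{Q}(T) \geq \Lambda$.

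Chaining the three inequalities gives $\Lambda \leq \tilde{Q}(T) \leq \tilde{S}(T) \leq \Lambda$, forcing equality throughout and in particular $\tilde{Q}(T) = \tilde{S}(T)$, as required. The only genuine content to check carefully — and what I would anticipate as the main technical step to write out cleanly — is that the degeneration produced by the Laser Method on a laser-ready $T$ really does land in an independent tensor of size $\Lambda^{n-o(n)}$ matching the $\Lambda$ that appears in Theorem~\ref{two}; this is a matter of verifying that the combinatorial construction underlying Theorem~\ref{two}'s upper bound (salami-slicing along marginal distributions in the support of $T$) is precisely what the Laser Method converts into a diagonal block, so that the two values of $\Lambda$ coincide by construction rather than by coincidence.
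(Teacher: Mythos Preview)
Your proposal is correct and follows essentially the same route as the paper: the inequality $\tilde{Q}(T)\le\tilde{S}(T)$ from Propositions~\ref{propone}~and~\ref{proptwo}, combined with the Laser Method's degeneration of $T^{\otimes n}$ to an independent tensor of size $\Lambda^{n-o(n)}$ where $\Lambda$ is the bound of Theorem~\ref{two} (this is exactly the content of Theorem~\ref{thmlaseropt} and Corollary~\ref{subrankcorr}). One small descriptive inaccuracy: in the paper's pipeline the Laser Method first yields a direct sum of matrix multiplication tensors (Theorem~\ref{laserthm}), and an \emph{additional} degeneration via Proposition~\ref{propthree} then converts these into the independent tensor, so it is not ``omitting the last step'' but rather appending one more.
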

\noindent Since, as discussed above, almost all of the most-studied tensors are laser-ready, this might help explain why we have been unable to separate the two notions.

\subsection{Other Related Work}

\paragraph{Probabilistic Tensors and Support Rank}
Cohn and Umans~\cite{cohn2013fast} introduced the notion of the \emph{support rank} of tensors, and showed that upper bounds on the support rank of matrix multiplication tensors can be used to design faster \emph{Boolean} matrix multiplication algorithms. Recently, Karppa and Kaski~\cite{karppa2019probabilistic} used `probabilistic tensors' as another way to design Boolean matrix multiplication algorithms. 

In fact, our tools for proving asymptotic slice rank upper bounds can be used to prove lower bounds on these approaches as well. For instance, our results imply that finding a `weighted' matrix multiplication tensor as a degeneration of a power of $CW_q$ (in order to prove a support rank upper bound) cannot result in a better exponent for Boolean matrix multiplication than $2.16805$. 

This is because `weighted' matrix multiplication tensors can degenerate into independent tensors just as large as their unweighted counterparts. Similarly, if a probabilistic tensor $\mathcal{T}$ is degenerated into a (probabilistic) matrix multiplication tensor, Karppa and Kaski show that this gives a corresponding support rank expression for matrix multiplication as well, and so upper bounds on $\tilde{S}(T)$ for any $T$ in the support of $\mathcal{T}$ also result in lower bounds on this approach. 

\paragraph{Concurrent Work} Christandl, Vrana and Zuiddam~\cite{theirpreprint} independently proved some of the same lower bounds on $\omega_u$ as us, including Theorem~\ref{mainthm}.
Although we achieve the same upper bounds on $\omega_u(T)$ for a number of tensors, our techniques seem different: we use simple combinatorial tools generalizing those from our prior work~\cite{aw}, 
while their bounds use the seemingly more complicated machinery of Strassen's asymptotic spectrum of tensors~\cite{strassen1991degeneration}. They thus phrase their results in terms of the asymptotic subrank $\tilde{Q}(T)$ of tensors rather than the asymptotic slice rank $\tilde{S}(T)$, and the fact that their bounds are often the same as ours is related to the fact we prove, in Corollary~\ref{introsubrankcor}, that $\tilde{Q}(T) = \tilde{S}(T)$ for all of the tensors we study; see the bottom of Section~\ref{sliceranksubsec} for a more technical discussion of the differences between the two notions. Our other results and applications of our techniques are, as far as we know, entirely new, including our matching lower bounds for $\tilde{S}(CW_q)$, $\tilde{S}(cw_q)$, and $\tilde{S}(T_q)$, bounding the value $V_\tau(T)$ of tensors, and all our results about the completeness of the Laser Method. By comparison, their `irreversibility' approach only seems to upper bound $\omega_u(T)$ itself.

\subsection{Outline}
In Section~\ref{overviewsec} we give an overview of the proofs of our main results. In Section~\ref{sectwo} we introduce all the concepts and notation related to tensors which will be used throughout the paper. In particular, in Subsection~\ref{sliceranksubsec} we introduce the relevant notions and basic properties related to slice rank. In Section~\ref{secthree} we present the proofs of our new lower bounding tools for asymptotic slice rank. In Section~\ref{secfour} we apply these tools to a number of tensors of interest including $CW_q$. Finally, in Section~\ref{lasersec}, we define and discuss the ``completeness'' of the Laser method.

\section{Proof Overview} \label{overviewsec}

We give a brief overview of the techniques we use to prove our main results, Theorems~\ref{mainthm}~and~\ref{introlaseropt}. All the technical terms we refer to here will be precisely defined in Section~\ref{sectwo}.

\paragraph{Section~\ref{sliceranksubsec}: Asymptotic Slice Rank and its Connection with Matrix Multiplication} The tensors we study are 3-tensors, which can be seen as trilinear forms over three sets $X,Y,Z$ of formal variables. The slice rank $S(T)$ of a tensor $T$ is a measure of the complexity of $T$, analogous to the rank of a matrix. In this paper we study the \emph{asymptotic slice rank} $\tilde{S}(T)$ of tensors $T$:
$$\tilde{S}(T) := \limsup_{n \in\N} S(T^{\otimes n})^{1/n}.$$
$\tilde{S}$ satisfies two key properties:
\begin{enumerate}
    \item Degenerations cannot increase the asymptotic slice rank of a tensor. In other words, if $A$ degenerates to $B$, then $\tilde{S}(B) \leq \tilde{S}(A)$.
    \item Matrix multiplication tensors have high asymptotic slice rank.
\end{enumerate}
This means that if a certain tensor $T$ has a small value of $\omega_u(T)$, or in other words, powers $T^{\otimes n}$ can degenerate into large matrix multiplication tensors, then $T$ itself must have large asymptotic slice rank. Hence, in order to lower bound $\omega_u(T)$, it suffices to upper bound $\tilde{S}(T)$.

\paragraph{Section~\ref{secthree}: Tools for Upper Bounding Asymptotic Slice Rank} In general, bounding $\tilde{S}(T)$ for a tensor $T$ can be much more difficult than bounding $S(T)$. This is because $S$ can be supermultiplicative, i.e. there are tensors $A$ and $B$ such that $S(A) \cdot S(B) \ll S(A \otimes B)$. Indeed, we will show that $\tilde{S}(T) > S(T)$ for many tensors $T$ of interest, including $T = CW_q$.

We will give three new tools for upper bounding $\tilde{S}(T)$ for many tensors $T$. Each applies to tensors with different properties:

\begin{itemize}
    \item Theorem~\ref{one}: If $T$ is over $X,Y,Z$, then it is straightforward to see that if one of the variable sets is not too large, then $\tilde{S}(T)$ must be small: $\tilde{S}(T) \leq \min\{ |X|, |Y|, |Z| \}$. In this first tool, we show how if $T$ can be written as a sum $T = T_1 + \cdots + T_k$ of a few tensors, and each $T_i$ does not have many of one type of variable, then we can still derive an upper bound on $\tilde{S}(T)$. 
    \item Theorem~\ref{two}: The second tool concerns partitions of the variable sets $X,Y,Z$. It shows that if $\tilde{S}(T)$ is large, then there is a probability distribution on the blocks of $T$ (subtensors corresponding to a choice of one part from each of the three partitions) so that the total probability mass assigned to each part of each partition is proportional to its size. Loosely, this means that $T$ must have many different `symmetries', no matter how its variables are partitioned. 
    \item Theorem~\ref{removeanxtool}: Typically, for tensors $A$ and $B$, even if $\tilde{S}(A)$ and $\tilde{S}(B)$ are `small', it may still be the case that $\tilde{S}(A+B)$ is large. This third tool shows that if $A$ has an additional property, then one can still bound $\tilde{S}(A+B)$. Roughly, the property that $A$ must satisfy is that not only is $\tilde{S}(A)$ small, but a related notion called the `x-rank' of $A$ must also be small.
\end{itemize}

In particular, we will remark that our three tools for bounding $\tilde{S}(T)$ strictly generalize similar tools introduced by~\cite{aw} for bounding $\tilde{I}(T)$. Hence, we generalize their results bounding $\omega_g(T)$ for various tensors $T$ to bounds on $\omega_u(T)$.

\paragraph{Section~\ref{secfour}: Universal Method Lower Bounds} We apply our tools to prove upper bounds on $\tilde{S}(T)$, and hence lower bounds on $\omega_u(T)$, for a number of tensors $T$ of interest. To prove Theorem~\ref{mainthm}, we show that \emph{all three} tools can be applied to $CW_q$. We also apply our tools to many other tensors of interest including the generalized Coppersmith-Winograd tensors $CW_{q,\sigma}$, the generalized small Coppersmith-Winograd tensors $cw_{q,\sigma}$, the structural tensor $T_q$ of the cyclic group $C_q$ as well as its `lower triangular version' $T_q^{lower}$ , and the subtensor $t_{112}$ of $CW_q^{\otimes 2}$ which arises in~\cite{coppersmith,stothers,v12,legall,legallrect,legallrect2}. Throughout Section~\ref{secfour} we give many tables of concrete lower bounds that we prove for the tensors in all these different families.

\paragraph{Section~\ref{lasersec}: ``Completeness'' of the Laser Method} Finally, we study the Laser Method. The Laser Method applied to a tensor $T$ shows that powers $T^{\otimes n}$ can zero out into large matrix multiplication tensors. Using the properties of $\tilde{S}$ that we prove in Section~\ref{sliceranksubsec}, we will show that the Laser Method can also be applied to a tensor $T$ to prove a lower bound on $\tilde{S}(T)$. (More precisely, it actually proves a lower bound on $\tilde{Q}(T)$, the \emph{asymptotic subrank of $T$}, which in turn lower bounds $\tilde{S}(T)$).

We prove Theorem~\ref{introlaseropt} by combining this construction with Theorem~\ref{two}, one of our tools for upper bounding $\tilde{S}(T)$. Intuitively, both Theorem~\ref{two} and the Laser Method are concerned with probability distributions on blocks of a tensor, and both involve counting the number of variables in powers $T^{\otimes n}$ that are consistent with these distributions. We use this intuition to show that the upper bound given by Theorem~\ref{two} is equal to the lower bound given by the Laser Method.

\section{Preliminaries} \label{sectwo}

We begin by introducing the relevant notions and notation related to tensors and matrix multiplication. We will use the same notation introduced in \cite[Section~3]{aw}, and readers familiar with that paper may skip to Subsection~\ref{subsec:value}.

\subsection{Tensor Basics}

For sets $X = \{x_1, \ldots, x_q\}$, $Y = \{y_1, \ldots, y_r\}$, and $Z = \{z_1, \ldots, z_s \}$ of formal variables, a \emph{tensor over $X,Y,Z$} is a trilinear form
$$T = \sum_{x_i \in X, y_j \in Y, z_k \in Z} \alpha_{ijk} x_i y_j z_k,$$
where the $\alpha_{ijk}$ coefficients come from an underlying field $\F$. The \emph{terms}, which we write as $x_i y_j z_k$, are sometimes written as $x_i\otimes y_j\otimes z_k$ in the literature. We say $T$ is \emph{minimal for} $X,Y,Z$ if, for each $x_i \in X$, there is a term involving $x_i$ with a nonzero coefficient in $T$, and similarly for $Y$ and $Z$ (i.e. $T$ can't be seen as a tensor over a strict subset of the variables). We say that two tensors $T_1, T_2$ are \emph{isomorphic}, written $T_1 \simeq T_2$, if they are equal up to renaming variables.

If $T_1$ is a tensor over $X_1, Y_1, Z_1$, and $T_2$ is a tensor over $X_2, Y_2, Z_2$, then the \emph{tensor product} $T_1 \otimes T_2$ is a tensor over $X_1 \times X_2, Y_1 \times Y_2, Z_1 \times Z_2$ such that, for any $(x_1,x_2) \in X_1 \times X_2$, $(y_1,y_2) \in Y_1 \times Y_2$, and $(z_1,z_2) \in  Z_1 \times Z_2$, the coefficient of $(x_1,x_2)(y_1,y_2)(z_1,z_2)$ in $T_1 \otimes T_2$ is the product of the coefficient of $x_1 y_1 z_1$ in $T_1$, and the coefficient of $x_2 y_2 z_2$ in $T_2$. 
For any tensor $T$ and positive integer $n$, the tensor power $T^{\otimes n}$ is the tensor over $X^n, Y^n, Z^n$ resulting from taking the tensor product of $n$ copies of $T$.

If $T_1$ is a tensor over $X_1, Y_1, Z_1$, and $T_2$ is a tensor over $X_2, Y_2, Z_2$, then the \emph{direct sum} $T_1 \oplus T_2$ is a tensor over $X_1 \sqcup X_2$, $Y_1 \sqcup Y_2$, $Z_1 \sqcup Z_2$ which results from forcing the variable sets to be disjoint (as in a normal disjoint union) and then summing the two tensors. For a nonnegative integer $m$ and tensor $T$ we write $m \odot T$ for the disjoint sum of $m$ copies of $T$.

\subsection{Tensor Rank}

A tensor $T$ has \emph{rank one} if there are values $a_i \in \F$ for each $x_i \in X$, $b_j \in \F$ for each $y_j \in Y$, and $c_k \in \F$ for each $z_k \in Z$, such that the coefficient of $x_i y_j z_k$ in $T$ is $a_i b_j c_k$, or in other words,
$$T=\sum_{x_i \in X, y_j \in Y, z_k \in Z} a_i b_j c_k \cdot x_i y_j z_k = \left( \sum_{x_i \in X} a_i x_i \right) \left( \sum_{y_j \in Y} b_j y_j \right) \left( \sum_{z_k \in Z} c_k z_k \right).$$
The \emph{rank} of a tensor $T$, denoted $R(T)$, is the smallest number of rank one tensors whose sum (summing the coefficient of each term individually) is $T$. It is not hard to see that for tensors $T$ and positive integers $n$, we always have $R(T^{\otimes n}) \leq R(T)^n$, but for some tensors $T$ of interest this inequality is not tight. We thus define the \emph{asymptotic rank} of tensor $T$ as $\tilde{R}(T) := \liminf_{n \in \N} (R(T^{\otimes n}))^{1/n}$.

\subsection{Matrix Multiplication Tensors}

For positive integers $a,b,c$, the matrix multiplication tensor $\langle a,b,c \rangle$ is a tensor over $\{x_{ij}\}_{i \in [a], j \in [b]}$, $\{y_{jk}\}_{j \in [b], k \in [c]}$, $\{ z_{ki} \}_{k \in [c], i \in [a]}$ given by
$$\langle a,b,c\rangle = \sum_{i=1}^a\sum_{j=1}^b\sum_{k=1}^c x_{ij}y_{jk}z_{ki}.$$
It is not hard to verify that for positive integers $a_1, a_2, b_1, b_2, c_1, c_2$, we have $\langle a_1, b_1, c_1 \rangle \otimes \langle a_2, b_2, c_2 \rangle \simeq \langle a_1 a_2, b_1 b_2, c_1 c_2 \rangle$. The \emph{exponent of matrix multiplication}, denoted $\omega$, is defined as $$\omega := \liminf_{a,b,c \in \N} 3 \log_{abc}(R(\langle a,b,c \rangle)).$$
Because of the tensor product property above, we can alternatively define $\omega$ in a number of ways:
$$\omega = \liminf_{a,b,c \in \N} 3 \log_{abc}(\tilde{R}(\langle a,b,c \rangle)) = \liminf_{n \in \N} \log_n \tilde{R}(\langle n,n,n \rangle) = \log_2(\tilde{R}(\langle 2,2,2 \rangle)).$$
For instance, Strassen~\cite{strassen} showed that $R(\langle 2,2,2 \rangle) \leq 7$, which implies that $\omega \leq \log_2(7)$.
\subsection{Degenerations and the Universal Method}

We now describe a very general way to transform from a tensor $T_1$ over $X_1, Y_1, Z_1$ to a tensor $T_2$ over $X_2, Y_2, Z_2$. For a formal variable $\lambda$, pick maps $\alpha : X_1 \times X_2 \to \F(\lambda)$, $\beta : Y_1 \times Y_2 \to \F(\lambda)$, and $\gamma : Z_1 \times Z_2 \to \F(\lambda)$, which map pairs of variables to polynomials in $\lambda$, and pick an integer $h$. Then, when you replace each $x \in X_1$ with $\sum_{x' \in X_2} \alpha(x,x') x'$, each $y \in Y_1$ with $\sum_{y' \in Y_2} \beta(y,y') y'$, and each $z \in Z_1$ with $\sum_{z' \in Z_2} \gamma(z,z') z'$, in $T_1$, then the resulting tensor $T'$ is a tensor over $X_2, Y_2, Z_2$ with coefficients over $\F(\lambda)$. When $T'$ is instead viewed as a polynomial in $\lambda$ whose coefficients are tensors over $X_2,Y_2,Z_2$ with coefficients in $\F$, it must be that $T_2$ is the coefficient of $\lambda^h$, and the coefficient of $\lambda^{h'}$ is $0$ for all $h'<h$.

If such a transformation is possible, we say $T_2$ is a \emph{degeneration} of $T_1$. There are also two more restrictive types of degenerations:
\begin{itemize}
    \item $T_2$ is a \emph{monomial degeneration} of $T_1$ if such a transformation is possible where the polynomials in the ranges of $\alpha, \beta, \gamma$ have at most one monomial, and furthermore, for each $x \in X_1$ there is at most one $x' \in X_2$ such that $\alpha(x,x') \neq 0$, and similarly for $\beta$ and $\gamma$.\footnote{Some definitions of monomial degenerations do not have this second condition, or equivalently, consider a monomial degeneration to be a `restriction' composed with what we defined here. The distinction is not important for this paper, but we give this definition since it captures Strassen's monomial degeneration from matrix multiplication tensors to independent tensors~\cite{laser} (see also Proposition~\ref{propthree} below), and it is the notion that the prior work~\cite{aw} proved lower bounds against.}
    \item $T_2$ is a \emph{zeroing out} of $T_1$ if, in addition to the restrictions of a monomial degeneration, the ranges of $\alpha, \beta, \gamma$ must be $\{0,1\}$.
\end{itemize}

Degenerations are useful in the context of matrix multiplication algorithms because degenerations cannot increase the rank of a tensor. In other words, if $T_2$ is a degeneration of $T_1$, then $R(T_2) \leq R(T_1)$~\cite{bini1980border}. It is often hard to bound the rank of matrix multiplication tensors directly, so all known approaches proceed by bounding the rank of a different tensor $T$ and then showing that powers of $T$ degenerate into matrix multiplication tensors.

More precisely, all known approaches fall within the following method, which we call the \emph{Universal Method}~\cite{aw} applied to a tensor $T$ of asymptotic rank $R = \tilde{R}(T)$: Consider all positive integers $n$, and all ways to degenerate $T^{\otimes n}$ into a disjoint sum $\bigoplus_{i=1}^m \langle a_i, b_i, c_i \rangle$ of matrix multiplication tensors, resulting in an upper bound on $\omega$ by the asymptotic sum inequality~\cite{Sch81} of $\sum_{i=1}^m (a_i b_i c_i)^{\omega/3} \leq R^n$. Then, $\omega_u(T)$, the bound on $\omega$ from the Universal Method applied to $T$, is the $\liminf$ over all such $n$ and degenerations, of the resulting upper bound on $\omega$.

In~\cite{aw}, two weaker versions of the Universal Method are also defined: \emph{the Galactic Method}, in which the degeneration must be a monomial degeneration, resulting in a bound $\omega_g(T)$, and \emph{the Solar Method}, in which the degeneration must be a zeroing out, resulting in a bound $\omega_s(T)$. To be clear, all three of these methods are very general, and we don't know the values of $\omega_s(T)$, $\omega_g(T)$, or $\omega_u(T)$ for almost any nontrivial tensors $T$. In fact, all the known approaches to bounding $\omega$ proceed by giving upper bounds on $\omega_s(T)$ for some carefully chosen tensors $T$; the most successful has been the Coppersmith-Winograd family of tensors $T = CW_q$, which has yielded all the best known bounds on $\omega$ since the 80's~\cite{CoppersmithW82,stothers,v12,legall}. Indeed, the two most successful approaches, the Laser Method~\cite{strassenlaser1} and the Group-Theoretic Approach~\cite{cohn2003group} ultimately use zeroing outs of tensors. We refer the reader to~\cite[Sections 3.3 and 3.4]{aw} for more details on these approaches and how they relate to the notions used here.

\subsection{Tensor Value} \label{subsec:value}

Coppersmith and Winograd~\cite{coppersmith} defined the \emph{value} of a tensor in their analysis of the $CW_q$ tensor. For a tensor $T$, and any $\tau \in [2/3, 1]$, the $\tau$-value of $T$, denoted $V_\tau(T)$, is defined as follows: Consider all positive integers $n$, and all ways $\sigma$ to degenerate $T^{\otimes n}$ into a direct sum $\bigoplus_{i=1}^{q(\sigma)} \langle a_i^\sigma, b_i^\sigma, c_i^\sigma \rangle$ of matrix multiplication tensors. Then, $V_\tau(T)$ is given by $$V_\tau(T) := \limsup_{n, \sigma} \left( \sum_{i=1}^{q(\sigma)} (a_i^\sigma b_i^\sigma c_i^\sigma)^\tau \right)^{1/n}.$$
We can then equivalently define $\omega_u(T)$ as the $\liminf$ of $\omega_u$, over all $\omega_u \in [2,3]$ such that $V_{\omega_u / 3}(T) \geq \tilde{R}(T)$. We can see from the power mean inequality that $V_\tau(T) \geq V_{2/3}(T)^{3 \tau / 2}$ for all $\tau \in [2/3,1]$, although this bound is often not tight as there can be better degenerations of $T^{\otimes n}$ depending on the value of $\tau$.

\subsection{Asymptotic Slice Rank} \label{sliceranksubsec}

The main new notions we will need in this paper relate to the slice rank of tensors. We say a tensor $T$ over $X,Y,Z$ has \emph{x-rank} $1$ if it is of the form
$$T = \left( \sum_{x \in X} \alpha_x \cdot x \right) \otimes \left( \sum_{y \in Y} \sum_{z \in Z} \beta_{y,z} \cdot y \otimes z \right) = \sum_{x \in X, y \in Y, z \in Z} \alpha_x \beta_{y,z} \cdot xyz$$
for some choices of the $\alpha$ and $\beta$ coefficients over the base field. More generally, the x-rank of $T$, denoted $\xr(T)$, is the minimum number of tensors of x-rank 1 whose sum is $T$. We can similarly define the y-rank, $\yr$, and the z-rank, $\zr$. Then, the \emph{slice rank} of $T$, denoted $\sr(T)$, is the minimum $k$ such that there are tensors $T_X$, $T_Y$ and $T_Z$ with $T = T_X + T_Y + T_Z$ and $\xr(T_X) + \yr(T_Y) + \zr(T_Z) = k$.

Unlike tensor rank, the slice-rank is not submultiplicative in general, i.e. there are tensors $A$ and $B$ such that $\sr(A \otimes B) > \sr(A) \cdot \sr(B)$. For instance, it is not hard to see that $\sr(CW_5) = 3$, but since it is known~\cite{v12,legall} that $\omega_s(CW_5) \leq 2.373$, it follows (e.g. from Theorem~\ref{thm:symmSOmega} below) that $\sr(CW_q^{\otimes n}) \geq 7^{n \cdot 2/2.373 - o(n)} \geq 5.15^{n - o(n)}$. We are thus interested in the \emph{asymptotic slice rank}, $\tilde{S}(T)$, of tensors $T$, defined as
$$\tilde{S}(T) := \limsup_{n \in \N} [\sr(T^{\otimes n})]^{1/n}.$$

We note a few simple properties of slice rank which will be helpful in our proofs:

\begin{lemma} \label{propertieslemma}
For tensors $A$ and $B$:

\begin{enumerate}[(1)]
    \item $\sr(A) \leq \xr(A) \leq \rr(A)$,
    \item $\xr(A \otimes B) \leq \xr(A) \cdot \xr(B)$,
    \item $\sr(A + B) \leq \sr(A) + \sr(B)$, and $\xr(A+B) \leq \xr(A) + \xr(B)$,
    \item $\sr(A \otimes B) \leq \sr(A) \cdot \max \{ \xr(B), \yr(B), \zr(B) \}$, and
    \item If $A$ is a tensor over $X,Y,Z$, then $\xr(T) \leq |X|$ and hence $\sr(T) \leq \min\{ |X|, |Y|, |Z|\}$.
\end{enumerate}
\end{lemma}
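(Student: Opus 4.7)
The plan is to verify each of the five bullets essentially directly from the definitions of $\xr, \yr, \zr, \sr$ and the bilinearity of the tensor product, with a little care in (4).

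For part (1), I will first observe that every rank-one tensor, being of the form $(\sum_x a_x x)(\sum_y b_y y)(\sum_z c_z z)$, is a special case of an x-rank-one tensor (set $\alpha_x = a_x$ and $\beta_{y,z} = b_y c_z$), which gives $\xr(A)\leq\rr(A)$ by taking an optimal rank decomposition of $A$. Then $\sr(A)\leq\xr(A)$ is immediate from the definition of slice rank: the decomposition $T = T + 0 + 0$, where $T$ is treated as the $T_X$ piece, witnesses $\sr(T)\leq\xr(T)$. Part (2) follows by writing $A = \sum_{i=1}^{\xr(A)} u_i\otimes M_i$ and $B = \sum_{j=1}^{\xr(B)} v_j\otimes N_j$, where each $u_i,v_j$ is a single linear form in $X$-variables and each $M_i,N_j$ is a bilinear form in $Y,Z$-variables; then $A\otimes B = \sum_{i,j}(u_i\otimes v_j)\otimes(M_i\otimes N_j)$, and each summand is x-rank one after identifying $X_1\times X_2$ as the new $X$-variable set. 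Part (3) is just concatenation of decompositions: stacking optimal slice (resp.\ x-rank) decompositions of $A$ and $B$ gives a decomposition of $A+B$ of size $\sr(A)+\sr(B)$ (resp.\ $\xr(A)+\xr(B)$).

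For part (4), which is the only one that requires any thought, I will take an optimal slice-rank decomposition $A = A_X + A_Y + A_Z$ achieving $\sr(A) = \xr(A_X) + \yr(A_Y) + \zr(A_Z)$. Distributing, $A\otimes B = A_X\otimes B + A_Y\otimes B + A_Z\otimes B$, so by part (3),
\[
\sr(A\otimes B)\leq \xr(A_X\otimes B) + \yr(A_Y\otimes B) + \zr(A_Z\otimes B).
\]
Applying part (2) (and its obvious analogues for $\yr,\zr$) gives
\[
\sr(A\otimes B)\leq \xr(A_X)\xr(B) + \yr(A_Y)\yr(B) + \zr(A_Z)\zr(B),
\]
which is bounded by $(\xr(A_X)+\yr(A_Y)+\zr(A_Z))\cdot\max\{\xr(B),\yr(B),\zr(B)\} = \sr(A)\cdot\max\{\xr(B),\yr(B),\zr(B)\}$. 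Finally, for part (5), for each $x_i\in X$ the tensor $T_i := x_i\otimes\bigl(\sum_{y,z}\alpha_{i,y,z}\,yz\bigr)$ consisting of all terms of $T$ containing $x_i$ has x-rank at most $1$, and $T = \sum_{i=1}^{|X|} T_i$, so by part (3) $\xr(T)\leq |X|$; combined with part (1) and the symmetric arguments using $Y$ and $Z$, this yields $\sr(T)\leq\min\{|X|,|Y|,|Z|\}$.

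The only step with any subtlety is part (4): one has to resist the temptation to bound $\sr(A\otimes B)$ by $\sr(A)\cdot\sr(B)$ (which is false in general, as the paper notes that slice rank is not submultiplicative), and instead use the fact that each of the three pieces $A_X,A_Y,A_Z$ has its rank type \emph{preserved} under tensoring with $B$ only when tensoring with the \emph{matching} rank type of $B$, which forces the $\max$ over the three ranks of $B$ rather than $\sr(B)$. Everything else is routine bookkeeping.
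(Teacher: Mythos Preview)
Your proof is correct and follows essentially the same approach as the paper: the paper dismisses (1)--(3) and (5) as straightforward (with (5) noted via ``any tensor with only one $x$-variable has $x$-rank $1$'') and proves (4) exactly as you do, by taking an optimal slice-rank decomposition $A = A_X + A_Y + A_Z$, distributing over $\otimes B$, and bounding each piece by the submultiplicativity of $\xr,\yr,\zr$. Your write-up simply fills in more detail than the paper bothers to give.
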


\begin{proof}
(1) and (2) are straightforward. (3) follows since the sum of the slice rank (resp. x-rank) expressions for $A$ and for $B$ gives a slice rank (resp. x-rank) expression for $A+B$. To prove (4), let $m = \max \{ \xr(B), \yr(B), \zr(B) \}$, and note that if $A = A_X + A_Y + A_Z$ such that $\xr(A_X) + \yr(A_Y) + \zr(A_Z) = \sr(A)$, then
$$A \otimes B = A_X\otimes B + A_Y\otimes B + A_Z\otimes B,$$
and so
\begin{align*}\sr(A \otimes B) &\leq \sr(A_X\otimes B) + \sr(A_Y\otimes B) + \sr(A_Z\otimes B) \\ &\leq \xr(A_X\otimes B) + \yr(A_Y\otimes B) + \zr(A_Z\otimes B)\\ &\leq \xr(A_X) \xr( B) + \yr(A_Y) \yr( B) + \zr(A_Z) \zr( B)\\ &\leq \xr(A_X) m + \yr(A_Y) m + \zr(A_Z) m = \sr(A) \cdot m.\end{align*}

Finally, (5) follows since, for instance, any tensor with one only x-variable has x-rank 1.
\end{proof}

Asymptotic slice rank is interesting in the context of matrix multiplication algorithms because of the following facts.

\begin{definition}
For a positive integer $q$, the \emph{independent tensor of size $q$}, denoted $\langle q \rangle$, is the tensor $\sum_{i=1}^q x_i y_i z_i$ with $q$ terms that do not share any variables.
\end{definition}

\begin{proposition}[\cite{tao2} Corollary 2] \label{propone}
If $A$ and $B$ are tensors such that $A$ has a degeneration to $B$, then $\sr(B) \leq \sr(A)$, and hence $\tilde{S}(B) \leq \tilde{S}(A)$.
\end{proposition}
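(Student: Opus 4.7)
The plan is to prove the two inequalities separately. The $\tilde S$-statement reduces to the slice-rank statement by the standard closure of degeneration under tensor products: if $\alpha,\beta,\gamma,h$ witness $B$ as a degeneration of $A$, then $(\alpha^{\otimes n},\beta^{\otimes n},\gamma^{\otimes n},nh)$ witnesses $B^{\otimes n}$ as a degeneration of $A^{\otimes n}$, so once we know $\sr(B^{\otimes n})\leq \sr(A^{\otimes n})$ for every $n$, taking $\limsup_n(\,\cdot\,)^{1/n}$ yields $\tilde S(B)\leq \tilde S(A)$. The main work is thus proving the slice-rank inequality $\sr(B)\leq \sr(A)$.

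Fix a minimal decomposition $A=\sum_{i=1}^k S_i$ with $k=\sr(A)$ and each $S_i$ of x-, y-, or z-rank $1$. Applying the substitution maps termwise preserves the rank type, because each substitution acts linearly on a single variable set; for instance, an x-rank-$1$ summand $(\sum_x a_x x)\otimes(\sum_{y,z} b_{yz}\,yz)$ becomes $(\sum_{x'}(\sum_x a_x\alpha(x,x'))x')\otimes(\sum_{y',z'}(\sum_{y,z} b_{yz}\beta(y,y')\gamma(z,z'))\,y'z')$, still of x-rank $1$ but now with coefficients in $\F(\lambda)$. Hence $A^{\mathrm{sub}}=\sum_i S_i^{\mathrm{sub}}$ exhibits $\F(\lambda)$-slice-rank $\leq k$ for $A^{\mathrm{sub}}$, and by definition of degeneration $A^{\mathrm{sub}}=\lambda^h B+\lambda^{h+1}(\cdots)$ as a polynomial in $\lambda$. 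It remains to convert this $\F(\lambda)$-decomposition into an $\F$-slice-rank-$\leq k$ expression for $B$.

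For this final step, I would normalize each summand as $S_i^{\mathrm{sub}}=\lambda^{c_i}\widetilde S_i(\lambda)$, where $\widetilde S_i$ is a slice-rank-$1$ tensor with coefficients in $\F[\lambda]$ and both ``factors'' nonvanishing at $\lambda=0$; the leading tensor $L_i := \widetilde S_i(0)$ is then a nonzero slice-rank-$1$ tensor over $\F$. After dividing through by $\lambda^h$, assume WLOG $h=0$. Let $c=\min_i c_i$ and $I=\{i:c_i=c\}$. If $c\geq 0$ then setting $\lambda=0$ gives $B=\sum_{i:c_i=0}L_i$ directly, a slice-rank-$\leq|I|\leq k$ decomposition. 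If instead $c<0$ then the $\lambda^c$-coefficient of $A^{\mathrm{sub}}$ vanishes, forcing $\sum_{i\in I}L_i=0$ with $|I|\geq 2$ (the case $|I|=1$ would leave a nonzero $\lambda^c$ contribution). A linear-dependence argument then rewrites $A^{\mathrm{sub}}$ as a new $k$-summand slice-rank-$1$ decomposition with either strictly fewer terms at level $c$ or minimum level strictly greater than $c$: among same-type summands at level $c$, flattening the vanishing relation as a matrix equation $\sum_{i\in I}u_i(0)\cdot V_i(0)^T=0$ with each $V_i(0)\neq 0$ forces a dependence among the $u_i(0)$, which one uses to substitute one summand in terms of the others. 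Induction on the lexicographic pair $(c,|I|)$ drives us back to the $c\geq 0$ case.

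The main obstacle is executing this reduction step when the summands in $I$ have mixed types (some x-rank $1$, others y- or z-rank $1$), since the single equation $\sum_{i\in I}L_i=0$ couples the three different rank structures and no single matrix flattening gives a clean linear dependence among vectors of one fixed type. My plan is to handle this by applying the same-type reduction iteratively to the type-wise subfamilies of $I$ in turn, verifying at each step that the complexity pair $(c,|I|)$ strictly decreases while the total number $k$ of slice-rank-$1$ summands is preserved; alternatively, one can argue symmetrically by passing to a single rank type at a time and absorbing the other types into the ``error'' at higher $\lambda$-order. The tensor-product closure used in the first paragraph is straightforward once the definition of degeneration is unpacked.
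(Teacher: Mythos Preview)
The paper does not give its own proof of this proposition; it simply cites \cite{tao2}, Corollary~2. So there is nothing in-paper to compare against, and the question is whether your argument stands on its own as a substitute for the cited one.

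Your overall architecture is correct and standard: push a minimal slice-rank decomposition of $A$ through the substitution maps to get a $k$-term decomposition of $A^{\mathrm{sub}}(\lambda)$ into slice-rank-$1$ summands over $\F[\lambda]$, then extract the $\lambda^h$-coefficient $B$ by a valuation/leading-term induction. The reduction of the $\tilde S$-statement to the $S$-statement via closure of degeneration under $\otimes$ is fine, and your same-type reduction step is also correct: when all lowest-level summands share a type, the identity $\sum_{i\in I}\bar u_i(0)\otimes \bar M_i(0)=0$ with each factor nonzero does force a linear dependence among the $\bar u_i(0)$, and your substitution moves one term to strictly higher level while keeping the total count $k$ and the overall sum unchanged.

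The genuine gap is exactly where you flag it, and neither of your proposed fixes closes it. ``Apply the same-type reduction to the type-wise subfamilies of $I$'' presupposes $\sum_{i\in I_x}L_i=0$ for each type separately, which simply does not follow from $\sum_{i\in I}L_i=0$; in general the three type-sums are each nonzero and only cancel jointly. ``Absorb the other types into higher $\lambda$-order error'' cannot work either, since those summands live at the \emph{same} level $c$, not higher. Concretely, with $k=2$ and one $x$-type, one $y$-type term whose leading parts cancel, naively splitting the cancellation produces three slice-rank-$1$ pieces over $\F$, not two; getting back down to two requires using the extra structure (the common leading tensor has ordinary rank $1$) in a way your sketch does not supply. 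One clean way to finish---and essentially the route in the cited reference---is to bypass the term-by-term induction: show that $\{T:\sr(T)\le k\}$ is Zariski-closed (a finite union of images of proper maps from products of Grassmannians), note that $\lambda^{-h}A^{\mathrm{sub}}(\lambda)$ lies in this set for all $\lambda\ne 0$, and conclude $B=\lambda^{-h}A^{\mathrm{sub}}(\lambda)\big|_{\lambda=0}$ does too. Alternatively, one can make the induction go through by first using invertible column operations over $\F[\lambda]$ within each type to force the vectors $\bar u_i(0)$ of that type to be linearly independent (via Smith-normal-form/Gaussian-elimination bookkeeping mod $\lambda^{h+1}$), after which a mixed lowest-level cancellation becomes impossible and $c\ge h$ is forced. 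Either route needs to be written out; as it stands, the mixed-type step is a real hole.
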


\begin{proposition}[\cite{tao1} Lemma 1; see also \cite{blasiak} Lemma 4.7] \label{proptwo}
For any positive integer $q$, we have $\sr(\langle q \rangle) = \tilde{S}(\langle q \rangle) = q$, where $\langle q \rangle$ is the independent tensor of size $q$.
\end{proposition}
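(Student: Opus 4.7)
\emph{Plan.} The plan is to reduce the asymptotic statement to a non-asymptotic one, namely $\sr(\langle m \rangle) = m$ for every positive integer $m$, and then observe that $\langle q \rangle^{\otimes n} \simeq \langle q^n \rangle$ immediately promotes this to $\tilde{S}(\langle q \rangle) = q$. The upper bound $\sr(\langle m \rangle) \leq m$ is free: since $\langle m \rangle$ is a tensor whose variable sets each have size $m$, Lemma~\ref{propertieslemma}(5) gives $\sr(\langle m \rangle) \leq m$ without any further work.

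The lower bound $\sr(\langle m \rangle) \geq m$ is the real content. I would suppose for contradiction that a slice rank decomposition
$$\langle m \rangle = \sum_{p=1}^{a} f_p(x)\,G_p(y,z) + \sum_{p=1}^{b} h_p(y)\,P_p(x,z) + \sum_{p=1}^{c} u_p(z)\,V_p(x,y)$$
exists with $a + b + c < m$. The $a$ linear forms $\{f_p\}$ span at most an $a$-dimensional subspace of the linear forms in the $x$-variables, so there is an invertible change of variables $x = Ux'$ after which every $f_p$ involves only $x'_1, \ldots, x'_a$; perform analogous invertible substitutions $y = Vy'$ and $z = Wz'$ tailored to the $\{h_p\}$ and $\{u_p\}$. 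Now set $x'_1 = \cdots = x'_a = y'_1 = \cdots = y'_b = z'_1 = \cdots = z'_c = 0$: the right-hand side is annihilated term by term. Substituting the same restriction into the identity $\sum_i x_i y_i z_i = \sum_{j,k,l}\bigl(\sum_i U_{ij} V_{ik} W_{il}\bigr) x'_j y'_k z'_l$, the left-hand side must also vanish, forcing $\sum_i U_{ij} V_{ik} W_{il} = 0$ for all $j > a$, $k > b$, $l > c$. The hard step, which I view as the main obstacle, is deriving a contradiction from this vanishing together with the invertibility of $U,V,W$ and the strict inequality $a + b + c < m$; the cleanest way is to invoke Tao's slice rank lemma (\cite{tao1}; see also \cite[Lemma~4.7]{blasiak}), which is exactly the statement that the diagonal trilinear form $\sum_i X_i Y_i Z_i$ on $\F^m \times \F^m \times \F^m$ has slice rank at least $m$, equivalently that whenever it vanishes on a subspace product $U_0 \times V_0 \times W_0 \subseteq \F^m \times \F^m \times \F^m$ one must have $\dim U_0 + \dim V_0 + \dim W_0 \leq 2m$.

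Once $\sr(\langle m \rangle) = m$ is established for every $m$, the asymptotic statement is immediate. The tensor product operation just takes Cartesian products of pairwise disjoint variable sets, so $\langle q \rangle^{\otimes n} \simeq \langle q^n \rangle$, and applying the non-asymptotic result with $m = q^n$ gives $\sr(\langle q \rangle^{\otimes n}) = q^n$ on the nose. Specializing to $n = 1$ yields $\sr(\langle q \rangle) = q$, while taking $n$-th roots and the $\limsup$ yields $\tilde{S}(\langle q \rangle) = \limsup_n (q^n)^{1/n} = q$, completing the proof.
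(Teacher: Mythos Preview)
The paper gives no proof of this proposition; it is quoted from \cite{tao1} and \cite{blasiak} and used as a black box, so there is no in-paper argument to compare against beyond the bare citation.

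Your reduction of the asymptotic claim to the non-asymptotic one via $\langle q\rangle^{\otimes n}\simeq\langle q^n\rangle$ is correct, and the upper bound from Lemma~\ref{propertieslemma}(5) is fine.

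The lower bound, however, is circular. After the change of variables you reach a vanishing condition and then say the contradiction follows from ``Tao's slice rank lemma,'' which you yourself describe as ``exactly the statement that the diagonal trilinear form $\sum_i X_iY_iZ_i$ has slice rank at least $m$.'' But that diagonal form \emph{is} $\langle m\rangle$, so you are invoking $\sr(\langle m\rangle)\ge m$ to prove $\sr(\langle m\rangle)\ge m$. All of the change-of-variables work is then idle: if you are content to quote \cite{tao1} for the lower bound---which is precisely what the paper does---quote it immediately and drop the setup.

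If you actually want a self-contained argument, here is one way to finish what you started. Let $K\subseteq\F^m$ be the common zero set of $f_1,\dots,f_a$, a subspace of dimension at least $m-a$. For any $\xi\in K$, substituting $x=\xi$ kills the $x$-slice terms, so the bilinear form $(y,z)\mapsto\sum_i\xi_i y_i z_i$ equals $\sum_p h_p(y)P_p(\xi,z)+\sum_p u_p(z)V_p(\xi,y)$, a sum of $b+c$ rank-one bilinear forms and hence of matrix rank at most $b+c$. But its matrix is diagonal with entries $\xi_1,\dots,\xi_m$, so every $\xi\in K$ has at most $b+c$ nonzero coordinates. Writing a basis of $K$ in reduced row echelon form and summing the basis vectors gives an element of $K$ with at least $\dim K\ge m-a$ nonzero coordinates, forcing $m-a\le b+c$, i.e.\ $a+b+c\ge m$.
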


\begin{proposition}[\cite{laser} Theorem 4; see also \cite{aw} Lemma 4.2] \label{propthree}
For any positive integers $a,b,c$, the matrix multiplication tensor $\langle a,b,c \rangle$ has a (monomial) degeneration to an independent tensor of size at least $ 0.75 \cdot abc / \max \{a,b,c\}$.
\end{proposition}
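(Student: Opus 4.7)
The plan is to follow Strassen's classical construction and exhibit an explicit monomial degeneration from $\langle a,b,c\rangle$ onto a large independent tensor. By the cyclic symmetry $\langle a,b,c\rangle \simeq \langle b,c,a\rangle \simeq \langle c,a,b\rangle$, which merely permutes the roles of the $x,y,z$ variable sets, I assume without loss of generality that $c = \max\{a,b,c\}$, reducing the target size of the independent tensor to $\lceil 0.75\,ab\rceil$.

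The construction has two components. First, pick a set $S\subseteq [a]\times[b]\times[c]$ of triples whose three projections $(i,j,k)\mapsto (i,j),\,(j,k),\,(k,i)$ are each injective on $S$; this combinatorial condition ensures that the triples of $S$ can serve as the ``diagonal'' of an independent tensor $\langle |S|\rangle$. Second, exhibit integer exponents $u(i,j),\, v(j,k),\, w(k,i)$ such that the substitution $x_{ij}\to\lambda^{u(i,j)}x'_{\phi(i,j)}$ (with $x_{ij}\to 0$ for $(i,j)\notin \pi_{ij}(S)$), and the analogous substitutions for $y_{jk}$ and $z_{ki}$ with labels that agree on triples of $S$, force the total exponent $E(i,j,k):=u(i,j)+v(j,k)+w(k,i)$ of every ``surviving'' term (one with all three substitutions nonzero) to be at least a common threshold $h$, with equality if and only if $(i,j,k)\in S$. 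Extracting the coefficient of $\lambda^h$ then produces exactly $\langle|S|\rangle$, establishing the degeneration.

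A natural candidate is the ``lower-triangular'' set $S=\{(i,j,i+j): i\in[a],\, j\in[b],\, i+j<c\}$, whose projections are injective by inspection; for this $S$, the decomposition $u(i,j)=(i+j)^2$, $v(j,k)=k^2-2jk$, $w(k,i)=-2ik$ (each shifted by a constant so all exponents are nonnegative) yields $E=(k-i-j)^2$, which vanishes precisely on $S$. When $c\ge a+b-1$ this set has size $ab=abc/c$, which already exceeds the required $0.75\,abc/\max\{a,b,c\}$, and we are done in that regime.

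The \emph{main obstacle} is the regime $c<a+b-1$, where the lower-triangular set has size only $ab-\binom{a+b-c}{2}$, dipping to $\tfrac{n(n+1)}{2}<\tfrac{3n^2}{4}$ in the worst case $a=b=c=n$ for $n\ge 3$. Extending $S$ to the full modular matching $\{(i,j,(i+j)\bmod c)\}$ of size $ab$ preserves injectivity but obstructs the quadratic separator: any $f$ with $f(k-i-j)=u(i,j)+v(j,k)+w(k,i)$ must be at most quadratic, since the mixed third derivative of the right side in $i,j,k$ vanishes identically, and the unique quadratic that vanishes on both branches $k=i+j$ and $k=i+j-c$ is $\alpha(k-i-j)(k-i-j+c)$, which is negative on $k-i-j\in(-c,0)$ and hence cannot serve as a nonnegative exponent function. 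Overcoming this obstacle requires combining multiple exponent schemes, for instance by performing the degeneration in two stages that handle the ``forward'' and ``wraparound'' branches separately, which is precisely the content of Strassen's original argument and accounts for the $0.75$ factor in the final bound.
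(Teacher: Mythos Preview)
The paper does not prove this proposition; it quotes Strassen. Comparing to Strassen's argument: your framework (pick a matching $S\subseteq[a]\times[b]\times[c]$ with injective projections, separate it via an exponent $E(i,j,k)=u(i,j)+v(j,k)+w(k,i)$, extract the lowest $\lambda$-coefficient) is exactly right, and the regime $c\ge a+b-1$ is handled correctly. The gap is in the remaining regime, which you do not actually prove and whose resolution you mischaracterize.

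You commit to the slice $k=i+j$, find it too small when $c<a+b-1$, and then assert that Strassen repairs this via a two-stage degeneration handling the branches $k=i+j$ and $k=i+j-c$ of the full modular matching. That is not what Strassen does. His construction is a \emph{single} monomial degeneration differing from yours only in which slice is chosen: take $S_m=\{(i,j,k):i+j+k=m\}$ with $m$ near the center of the box. The projections are injective for the same reason (any two coordinates determine the third), and the separator $(i+j+k-m)^2$ decomposes as $u(i,j)+v(j,k)+w(k,i)$ by the very calculation you already wrote down. The only remaining work is the elementary count $|S_m|\ge \tfrac{3}{4}\,ab$ for the best $m$: writing $|S_m|$ as a sum of $c$ consecutive coefficients of the trapezoidal sequence of $(1+\cdots+x^{a-1})(1+\cdots+x^{b-1})$, the centered window omits at most $\lfloor \min(a,b)/2\rfloor\cdot\lceil\min(a,b)/2\rceil\le ab/4$ of the total mass $ab$. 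That count is the source of the $0.75$; no staging is involved.

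So the obstacle you diagnose for the full modular matching is real but irrelevant---Strassen never uses that matching. The fix to your plan is a one-line change of slice, not a new mechanism.
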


\begin{corollary} \label{cortopropthree}
For any positive integers $a,b,c$, we have $\tilde{S}(\langle a,b,c \rangle) = abc / \max \{a,b,c\}$.
\end{corollary}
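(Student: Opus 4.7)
The plan is to prove both inequalities $\tilde{S}(\langle a,b,c\rangle) \leq abc/\max\{a,b,c\}$ and $\tilde{S}(\langle a,b,c\rangle) \geq abc/\max\{a,b,c\}$ separately, with each using one of the ingredients already assembled in the preliminaries.

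For the upper bound, I would first observe that $\langle a,b,c\rangle$ is a tensor over variable sets of sizes $ab$, $bc$, and $ac$, so by Lemma~\ref{propertieslemma}(5) we get $\sr(\langle a,b,c\rangle) \leq \min\{ab, bc, ac\} = abc/\max\{a,b,c\}$. Since tensor powers of matrix multiplication tensors are matrix multiplication tensors, $\langle a,b,c\rangle^{\otimes n} \simeq \langle a^n, b^n, c^n\rangle$, and the same bound applied to this tensor yields $\sr(\langle a,b,c\rangle^{\otimes n}) \leq (abc)^n/\max\{a,b,c\}^n$. Taking $n$th roots and the $\limsup$ gives the desired upper bound on $\tilde{S}$.

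For the lower bound, I would combine Propositions~\ref{propthree}, \ref{propone}, and \ref{proptwo}. Applying Proposition~\ref{propthree} to the $n$th power $\langle a^n,b^n,c^n\rangle$ gives a (monomial) degeneration to an independent tensor of size at least $q_n := 0.75 \cdot (abc)^n / \max\{a,b,c\}^n$. By Proposition~\ref{propone}, $\sr(\langle a,b,c\rangle^{\otimes n}) \geq \sr(\langle q_n\rangle)$, and by Proposition~\ref{proptwo}, $\sr(\langle q_n\rangle) = q_n$. Therefore $\sr(\langle a,b,c\rangle^{\otimes n})^{1/n} \geq (0.75)^{1/n} \cdot abc/\max\{a,b,c\}$, and taking $\limsup$ as $n \to \infty$ makes the $(0.75)^{1/n}$ factor vanish, giving the matching lower bound.

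Neither direction should present a real obstacle; the only thing to be a little careful about is matching the definition $\tilde{S}(T) := \limsup_n \sr(T^{\otimes n})^{1/n}$ precisely against the two inequalities so that the $\limsup$ indeed delivers the clean bound $abc/\max\{a,b,c\}$ from both sides. The slight mismatch is that the upper bound controls every term of the sequence while the lower bound controls the limit via the $(0.75)^{1/n} \to 1$ trick, so I would just be explicit about that in writing it up.
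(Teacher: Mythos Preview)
Your proposal is correct and essentially identical to the paper's proof: both use the variable-count bound (Lemma~\ref{propertieslemma}(5)) on $\langle a^n,b^n,c^n\rangle$ for the upper bound, and Proposition~\ref{propthree} combined with Propositions~\ref{propone}--\ref{proptwo} on $\langle a^n,b^n,c^n\rangle$, letting the $(0.75)^{1/n}$ factor vanish, for the lower bound. The only cosmetic difference is that the paper fixes $c=\max\{a,b,c\}$ up front and works with $ab$ throughout rather than $abc/\max\{a,b,c\}$.
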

\begin{proof}
Assume without loss of generality that $c \geq a,b$. For any positive integer $n$, we have that $\langle a,b,c \rangle^{\otimes n} \simeq \langle a^n, b^n, c^n \rangle$ has a degeneration to an independent tensor of size at least $0.75 \cdot a^n b^n$, meaning $S(\langle a,b,c \rangle^{\otimes n}) \geq 0.75 \cdot a^n b^n$ and hence $\tilde{S}(\langle a,b,c \rangle) \geq (0.75)^{1/n} ab$, which means $\tilde{S}(\langle a,b,c \rangle) \geq  ab$. Meanwhile, $\langle a,b,c \rangle$ has $ab$ different $x$-variables, so it must have $\xr(\langle a,b,c \rangle) \leq ab$ and more generally, $\sr(\langle a,b,c \rangle^{\otimes n}) \leq \xr(\langle a,b,c \rangle^{\otimes n}) \leq (ab)^n$, which means $\tilde{S}(\langle a,b,c \rangle) \leq  ab$.
\end{proof}

To summarize: we know that degenerations cannot increase asymptotic slice rank, and that matrix multiplication tensors have a high asymptotic slice rank. Hence, if $T$ is a tensor such that $\omega_u(T)$ is `small', meaning a power of $T$ has a degeneration to a disjoint sum of many large matrix multiplication tensors, then $T$ itself must have `large' asymptotic slice rank. This can be formalized identically to~\cite[Theorem~4.1 and Corollary~4.3]{aw} to show:

\begin{theorem} \label{thm:SROmega}
For any tensor $T$, $$\tilde{S}(T) \geq \tilde{R}(T)^{\frac{6}{\omega_u(T)} - 2}.$$
\end{theorem}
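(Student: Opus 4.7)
The plan is to follow the blueprint of \cite[Theorem~4.1 and Corollary~4.3]{aw}, with the asymptotic slice rank $\tilde{S}$ playing the role of the asymptotic independence number $\tilde{I}$ used there. The tools needed are all already in place: the definition of $\omega_u(T)$ together with the asymptotic sum inequality, and Propositions~\ref{propone}--\ref{propthree} with Corollary~\ref{cortopropthree}, which collectively say that degenerations do not increase (asymptotic) slice rank, that independent tensors saturate slice rank, and that matrix multiplication tensors degenerate into large independent tensors.

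First I will fix $\epsilon > 0$, set $R := \tilde{R}(T)$ and $\omega := \omega_u(T) + \epsilon$, and unwind the definition of $\omega_u(T)$ to obtain a positive integer $n$ and a degeneration $T^{\otimes n} \rightsquigarrow \bigoplus_{i=1}^{m} \langle a_i, b_i, c_i \rangle$ whose asymptotic-sum-inequality bound is at most $\omega$; this unpacks to $\sum_i (a_i b_i c_i)^{\omega/3} \geq R^n$. Next, Proposition~\ref{propthree} turns each component $\langle a_i, b_i, c_i \rangle$ into an independent tensor of size at least $0.75 \cdot a_i b_i c_i / \max\{a_i, b_i, c_i\}$; concatenating these on the disjoint variable supports and invoking Propositions~\ref{propone} and~\ref{proptwo} yields the baseline estimate $S(T^{\otimes n}) \geq 0.75 \sum_i a_i b_i c_i / \max\{a_i, b_i, c_i\}$.

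The hard step will be converting this per-summand estimate into the target $R^{n(6/\omega - 2) - o(n)}$. This is impossible per-summand, since $abc/\max\{a,b,c\}$ collapses to $1$ for highly asymmetric components such as $\langle a,1,1\rangle$, whereas $\sum (a_i b_i c_i)^{\omega/3}$ may still be arbitrarily large. I will handle it with a Schönhage-style symmetrization on a high tensor power: the identity $\langle a,b,c\rangle \otimes \langle b,c,a\rangle \otimes \langle c,a,b\rangle \simeq \langle abc, abc, abc\rangle$ (valid up to the cyclic relabeling of variable sets that preserves asymptotic slice rank) allows combining balanced triples of factors inside $(\bigoplus_i \langle a_i, b_i, c_i\rangle)^{\otimes N}$ into symmetric matrix-multiplication sub-summands, whose dimensions are controlled by multinomial counts against the distribution $p_i \propto (a_i b_i c_i)^{\omega/3}$. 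Applying Proposition~\ref{propthree} once more to the dominant symmetric sub-summand, and then letting $N \to \infty$ and $\epsilon \to 0$, yields the claimed bound $\tilde{S}(T) \geq \tilde{R}(T)^{6/\omega_u(T) - 2}$.
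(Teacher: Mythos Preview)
Your setup and baseline estimate are correct, and you correctly identify that the per-summand bound $a_ib_ic_i/\max\{a_i,b_i,c_i\}$ is not strong enough by itself. The gap is in the proposed symmetrization. Inside $(\bigoplus_i \langle a_i,b_i,c_i\rangle)^{\otimes N}$, every tensor factor carries the fixed $(x,y,z)$-ordering of dimensions; no rotated factor $\langle b_i,c_i,a_i\rangle$ or $\langle c_i,a_i,b_i\rangle$ ever appears. Hence the identity $\langle a,b,c\rangle\otimes\langle b,c,a\rangle\otimes\langle c,a,b\rangle\simeq\langle abc,abc,abc\rangle$ has nothing to act on: each summand of the $N$-fold power is $\langle\prod_j a_{i_j},\prod_j b_{i_j},\prod_j c_{i_j}\rangle$, and your multinomial distribution $p_i\propto(a_ib_ic_i)^{\omega/3}$ does not force these three products to coincide. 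Cyclically relabelling the variable roles of a factor would simultaneously relabel the roles of $T^{\otimes nN}$ itself, which changes the tensor unless $T$ is variable-symmetric---this is exactly the extra hypothesis exploited in Theorem~\ref{thm:symmSOmega}. Nor can one pass through $T\otimes rot(T)\otimes rot^2(T)$, since $\tilde{S}$ is not sub-multiplicative and there is no way to push a lower bound on $\tilde S(T\otimes rot(T)\otimes rot^2(T))$ back to $\tilde S(T)$.

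The argument of \cite[Theorem~4.1 and Corollary~4.3]{aw} does not symmetrize. After the Sch\"onhage reduction to $T^{\otimes n}\rightsquigarrow F\odot\langle a,b,c\rangle$ with, say, $a\le b\le c$ and $F(abc)^{\omega/3}\ge R^n$, it uses one further constraint you omit: degeneration cannot increase asymptotic rank, and $\langle a,b,c\rangle$ zeroes out to $\langle 1,b,c\rangle$ with $\tilde R(\langle 1,b,c\rangle)=bc$, so $R^n\ge Fbc$. Then $a=abc/(bc)\ge (R^n/F)^{3/\omega-1}$, and since $b\ge a$ we get $ab\ge a^2\ge (R^n/F)^{6/\omega-2}$. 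Together with $\tilde S(T)^n\ge \tilde S(F\odot\langle a,b,c\rangle)=Fab$ (Corollary~\ref{cortopropthree}) this yields $\tilde S(T)^n\ge F^{3-6/\omega}R^{n(6/\omega-2)}\ge R^{n(6/\omega-2)}$, which is the claim. The rank constraint $R^n\ge Fbc$ is precisely what rules out the ``$\langle a,1,1\rangle$'' obstruction you flagged, and it does so without any rotation.
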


\begin{corollary} \label{cor:omegaands}
For any tensor $T$, if $\omega_u(T)=2$, then $\tilde{S}(T) = \tilde{R}(T)$. Moreover, for every constant $s<1$, there is a constant $w>2$ such that every tensor $T$ with $\tilde{S}(T) \leq \tilde{R}(T)^s$ must have $\omega_u(T) \geq w$.
\end{corollary}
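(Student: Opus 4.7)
The plan is to derive both assertions of Corollary~\ref{cor:omegaands} directly from Theorem~\ref{thm:SROmega}, combined with the elementary observation that $\tilde{S}(T) \leq \tilde{R}(T)$ for every tensor $T$.

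First I would record the general inequality $\tilde{S}(T) \leq \tilde{R}(T)$. By Lemma~\ref{propertieslemma}(1), $\sr(T^{\otimes n}) \leq \rr(T^{\otimes n})$ for every $n$, hence $\sr(T^{\otimes n})^{1/n} \leq \rr(T^{\otimes n})^{1/n}$. Tensor rank is submultiplicative under $\otimes$, so Fekete's lemma yields $\lim_n \rr(T^{\otimes n})^{1/n} = \inf_n \rr(T^{\otimes n})^{1/n} = \tilde{R}(T)$; taking $\limsup_n$ on the left-hand side then gives $\tilde{S}(T) \leq \tilde{R}(T)$.

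For the first claim, specializing Theorem~\ref{thm:SROmega} to $\omega_u(T) = 2$ makes the exponent $\frac{6}{\omega_u(T)} - 2$ equal to exactly $1$, so $\tilde{S}(T) \geq \tilde{R}(T)$. Combined with the opposite inequality just established, this forces $\tilde{S}(T) = \tilde{R}(T)$.

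For the second claim, given $s < 1$, I set $w := \frac{6}{s+2}$, which is strictly greater than $2$ since $s + 2 < 3$. I argue the contrapositive: assume $\omega_u(T) < w$ and deduce $\tilde{S}(T) > \tilde{R}(T)^s$. The inequality $\omega_u(T) < w$ rearranges to $\frac{6}{\omega_u(T)} - 2 > s$, so in the nontrivial case $\tilde{R}(T) > 1$ (where raising to a larger exponent yields a larger value) Theorem~\ref{thm:SROmega} gives
$$\tilde{S}(T) \;\geq\; \tilde{R}(T)^{\frac{6}{\omega_u(T)} - 2} \;>\; \tilde{R}(T)^s,$$
contradicting the hypothesis. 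The degenerate case $\tilde{R}(T) = 1$ collapses the hypothesis to $\tilde{S}(T) = 1$ and makes the asymptotic sum inequality defining $\omega_u(T)$ vacuous, so $\omega_u(T)$ equals the trivial bound $3 \geq w$. There is no real obstacle here; the corollary is essentially bookkeeping around the exponent function $x \mapsto \frac{6}{x} - 2$ appearing in Theorem~\ref{thm:SROmega}, together with the trivial upper bound $\tilde{S} \leq \tilde{R}$.
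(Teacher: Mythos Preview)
Your proposal is correct and follows exactly the approach the paper intends: the corollary is stated without proof as an immediate consequence of Theorem~\ref{thm:SROmega}, and you derive both parts by reading off the exponent $\frac{6}{\omega_u(T)}-2$ and combining with the trivial inequality $\tilde{S}(T)\le\tilde{R}(T)$. Your explicit choice $w=\tfrac{6}{s+2}$ is precisely what the inequality in Theorem~\ref{thm:SROmega} gives, and your treatment of the degenerate case $\tilde{R}(T)=1$ is a bit informal but harmless, since the content of the corollary concerns tensors with $\tilde{R}(T)>1$.
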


Almost all the tensors we consider in this note are \emph{variable-symmetric} tensors, and for these tensors $T$ we can get a better lower bound on $\omega_u(T)$ from an upper bound on $\tilde{S}(T)$. We say that a tensor $T$ over $X,Y,Z$ is variable-symmetric if $|X|=|Y|=|Z|$, and the coefficient of $x_i y_j z_k$ equals the coefficient of $x_j y_k z_i$ in $T$ for all $(x_i, y_j, z_k) \in X \times Y \times Z$.

\begin{theorem}\label{thm:symmSOmega}
For a variable-symmetric tensor $T$ we have $\omega_u(T) \geq 2 \log (\tilde{R}(T)) / \log(\tilde{S}(T)) $.
\end{theorem}

\begin{proof}
As in the proof of~\cite[Theorem~4.1]{aw}, by definition of $\omega_u$, we know that for every $\delta>0$, there is a positive integer $n$ such that $T^{\otimes n}$ has a degeneration to $F \odot \langle a,b,c \rangle$ for integers $F,a,b,c$ such that $\omega_u(T)^{1+\delta} \geq 3 \log(\tilde{R}(T)^n / F) / \log(abc)$. In fact, since $T$ is symmetric, we know $T^{\otimes n}$ also has a degeneration to $F \odot \langle b,c,a \rangle$ and to $F \odot \langle c,a,b \rangle$, and so $T^{\otimes 3n}$ has a degeneration to $F^3 \odot \langle abc, abc, abc \rangle$. As above, it follows that $\tilde{S}(T^{\otimes 3n}) \geq \tilde{S}(F^3 \odot \langle abc, abc, abc \rangle) = F^3 \cdot (abc)^2$. Rearranging, we see
$$abc \leq \tilde{S}(T)^{3n/2} / F^{3/2}.$$
Hence,
$$\omega_u(T)^{1+\delta} \geq 3 \frac{\log(\tilde{R}(T)^n / F) }{ \log(abc)} \geq 3 \frac{\log(\tilde{R}(T)^n / F) }{ \log(\tilde{S}(T)^{3n/2} / F^{3/2})} = 2 \frac{\log(\tilde{R}(T)) - \frac{1}{n} \log(F) }{ \log(\tilde{S}(T)) - \frac{1}{n}\log(F)} \geq 2 \frac{\log(\tilde{R}(T)) }{ \log(\tilde{S}(T)) },$$
where the last step follows because $\tilde{R}(T) \geq \tilde{S}(T)$ and so subtracting the same quantity from both the numerator and denominator cannot decrease the value of the fraction. This holds for all $\delta>0$ and hence implies the desired result.
\end{proof}

\paragraph{Slice Rank versus Subrank} For a tensor $T$, let $Q'(T)$ denote the largest integer $q$ such that there is a degeneration from $T$ to $\langle q \rangle$. The \emph{asymptotic subrank} of $T$ is defined as $\tilde{Q}(T) := \limsup_{n \in \N} Q'(T^{\otimes n})^{1/n}$. Propositions~\ref{propone} and \ref{proptwo} above imply that $\tilde{Q}(T) \leq \tilde{S}(T)$ for all tensors $T$. Similarly, it is not hard to see that Theorems~\ref{thm:SROmega} and \ref{thm:symmSOmega} hold with $\tilde{S}$ replaced by $\tilde{Q}$. One could thus conceivably hope to prove stronger lower bounds than those in this paper by bounding $\tilde{Q}$ instead of $\tilde{S}$. However, we will prove in Corollary~\ref{subrankcorr} below that $\tilde{Q}(T) = \tilde{S}(T)$ for every tensor we study in this paper, so such an improvement using $\tilde{Q}$ is impossible. More generally, there are currently no known tensors $T$ for which the best known upper bound on $\tilde{Q}(T)$ is smaller than the best known upper bound on $\tilde{S}(T)$ (including the new bounds of~\cite{quantum,theirpreprint}). Hence, novel tools for upper bounding $\tilde{Q}$ would be required for such an approach to proving better lower bounds on $\omega_u$.

\subsection{Partition Notation} \label{partitionnotation}

In a number of our results, we will be partitioning the terms of tensors into blocks defined by partitions of the three variable sets. Here we introduce some notation for some properties of such partitions; these definitions all depend on the particular partition of the variables being used, which will be clear from context.

Suppose $T$ is a tensor minimal over $X,Y,Z$, and let $X = X_1 \cup \cdots \cup X_{k_X}$, $Y = Y_1 \cup \cdots \cup Y_{k_Y}$, $Z = Z_1 \cup \cdots \cup Z_{k_Z}$ be partitions of the three variable sets. For $(i,j,k) \in [k_X]\times[k_Y]\times[k_Z]$, let $T_{ijk}$ be $T$ restricted to $X_i, Y_j, Z_k$ (i.e. $T$ with $X \setminus X_i$, $Y \setminus Y_j$, and $Z \setminus Z_k$ zeroed out), and let $L = \{ T_{ijk} \mid (i,j,k) \in [k_X]\times[k_Y]\times[k_Z], T_{ijk} \neq 0\}$. $T_{ijk}$ is called a \emph{block} of $T$. For $i \in [k_X]$ let $L_{X_i} = \{ T_{ij'k'} \in L \mid (j',k') \in  [k_Y] \times [k_Z]\}$, and define similarly $L_{Y_j}$ and $L_{Z_k}$.

We will be particularly interested in probability distributions $p : L \to [0,1]$. Let $P(L)$ be the set of such distributions. For such a $p \in P(L)$, and for $i \in [k_X]$, let $p(X_i) := \sum_{T_{ijk} \in L_{X_i}} p(T_{ijk})$, and similarly $p(Y_j)$ and $p(Z_k)$. Then, define $p_X \in \R$ by

$$p_X := \prod_{i \in [k_X]} \left( \frac{|X_i|}{p(X_i)}\right)^{p(X_i)},$$
and $p_Y$ and $p_Z$ similarly. This expression, which arises naturally in the Laser Method, will play an important role in our upper bounds and lower bounds.

\subsection{Tensor Rotations and Variable-Symmetric Tensors} \label{symmetricnotation}

If $T$ is a tensor over $X,Y,Z$, then the \emph{rotation of $T$}, denoted $rot(T)$, is the tensor over $Y,Z,X$ such that for any $(x_i, y_j, z_k) \in X \times Y \times Z$, the coefficient of $x_i y_j z_k$ in $T$ is equal to the coefficient of $y_j z_k x_i$ in $rot(T)$. Tensor $T$ is \emph{variable-symmetric} if $T \simeq rot(T)$.

If $T$ is a variable-symmetric tensor minimal over $X,Y,Z$, then partitions $X = X_1 \cup \cdots \cup X_{k_X}$, $Y = Y_1 \cup \cdots \cup Y_{k_Y}$, $Z = Z_1 \cup \cdots \cup Z_{k_Z}$ of the variable sets are called \emph{$T$-symmetric} if (using the notation of the previous subsection) $k_X = k_Y = k_Z$, $|X_i| = |Y_i| = |Z_i|$ for all $i \in [k_X]$, and the block $T_{jki} \simeq rot(T_{ijk})$ for all $(i,j,k) \in [k_X]^3$. For the $L$ resulting from such a $T$-symmetric partition, a probability distribution $p \in P(L)$ is called \emph{$T$-symmetric} if it satisfies $p(T_{ijk}) = p(T_{jki})$ for all $(i,j,k) \in [k_X]^3$, and we write $P^{sym}(L) \subseteq P(L)$ for the set of such $T$-symmetric distributions. Notice in particular that any $p \in P^{sym}(L)$ satisfies $p_X = p_Y = p_Z$.

\section{Combinatorial Tools for Asymptotic Slice Rank Upper Bounds} \label{secthree}

We now give three general tools for proving upper bounds on $\tilde{S}(T)$ for many tensors $T$. Each of our tools generalizes one of the three main tools of~\cite{aw}, which were bounding the weaker notion $\tilde{I}$ instead of $\tilde{S}$, and could also only apply to a more restrictive set of tensors. We will make clear what previous result we are generalizing, although our presentation here is entirely self-contained.

\subsection{Generalization of \cite[Theorem~5.3]{aw}}

We know that tensors $T$ without many of one variable have small $\tilde{S}(T)$. We begin by showing that if $T$ can be written as a sum of a few tensors, each of which does not have many of one variable, then we can still prove an upper bound on $\tilde{S}(T)$. 

If $X,Y,Z$ are minimal for $T$, then the \emph{measure} of $T$, denoted $\mu(T)$, is given by $\mu(T) := |X| \cdot |Y| \cdot |Z|$. We state two simple facts about $\mu$:

\begin{fact}
For tensors $A$ and $B$,
\begin{itemize}
    \item $\mu(A \otimes B) = \mu(A) \cdot \mu(B)$, and
    \item if $A$ is minimal over $X,Y,Z$, then $\sr(A) \leq \min\{ |X|, |Y|, |Z| \} \leq \mu(A)^{1/3}$.
\end{itemize} 
\end{fact}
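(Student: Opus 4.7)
The plan is to verify both bullet points directly from the definitions, with the only nontrivial observation being that tensor products preserve minimality.

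For the first bullet, I would start by fixing minimal variable sets $X_A, Y_A, Z_A$ for $A$ and $X_B, Y_B, Z_B$ for $B$, so that $\mu(A) = |X_A||Y_A||Z_A|$ and $\mu(B) = |X_B||Y_B||Z_B|$. By definition of the tensor product, $A \otimes B$ is a tensor over $X_A \times X_B$, $Y_A \times Y_B$, $Z_A \times Z_B$, and the key step is to check that these product sets are still minimal for $A \otimes B$. For any $(x,x') \in X_A \times X_B$, minimality of $A$ gives some $(y,z) \in Y_A \times Z_A$ with nonzero coefficient in $A$, and minimality of $B$ gives some $(y',z') \in Y_B \times Z_B$ with nonzero coefficient in $B$; then the coefficient of $(x,x')(y,y')(z,z')$ in $A \otimes B$ is the product of these and hence nonzero. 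The symmetric argument handles the $Y$ and $Z$ sides, so $A\otimes B$ is minimal over the product sets and $\mu(A \otimes B) = |X_A \times X_B|\cdot|Y_A \times Y_B|\cdot|Z_A \times Z_B| = \mu(A)\mu(B)$.

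For the second bullet, the inequality $\sr(A) \leq \min\{|X|,|Y|,|Z|\}$ is exactly Lemma~\ref{propertieslemma}(5), which was already established. The remaining inequality $\min\{|X|,|Y|,|Z|\} \leq \mu(A)^{1/3} = (|X||Y||Z|)^{1/3}$ is the elementary fact that the minimum of three positive reals is at most their geometric mean, so no further work is required.

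There is no real obstacle here: the only thing to be a bit careful about is the minimality check in the tensor product, since $\mu$ is defined with respect to minimal variable sets. Once that is verified, everything else is immediate from definitions and from the already-proved Lemma~\ref{propertieslemma}.
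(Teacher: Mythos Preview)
Your proposal is correct. The paper does not give a proof at all---it simply asserts these as ``two simple facts about $\mu$''---so your argument supplies exactly the routine verification the paper omits, and your care in checking that minimality is preserved under the tensor product is the one point worth making explicit.
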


\begin{theorem} \label{one}
Suppose $T$ is a tensor, and $T_1, \ldots, T_k$ are tensors with $T = T_1 + \cdots + T_k$. Then, $\tilde{S}(T) \leq \sum_{i=1}^k (\mu(T_i))^{1/3}$.
\end{theorem}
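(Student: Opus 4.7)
The plan is to expand $T^{\otimes n}$ multilinearly, apply subadditivity of slice rank to the sum, bound each summand via the measure, and then use the multinomial theorem to get a clean $n$-th power bound.

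First I would write
\[
T^{\otimes n} = (T_1 + \cdots + T_k)^{\otimes n} = \sum_{(i_1, \ldots, i_n) \in [k]^n} T_{i_1} \otimes T_{i_2} \otimes \cdots \otimes T_{i_n},
\]
using that $\otimes$ is multilinear and thus distributes over $+$. By property (3) of Lemma~\ref{propertieslemma}, slice rank is subadditive, so
\[
\sr(T^{\otimes n}) \leq \sum_{(i_1, \ldots, i_n) \in [k]^n} \sr\bigl(T_{i_1} \otimes \cdots \otimes T_{i_n}\bigr).
\]

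Next I would bound each summand using the measure. Without loss of generality each $T_i$ can be taken to be minimal over its variable sets (restricting to the support does not change $\mu(T_i)$). Then $T_{i_1} \otimes \cdots \otimes T_{i_n}$ is minimal over the product of the corresponding variable sets, so by the second bullet of the Fact above Theorem~\ref{one}, together with the multiplicativity of $\mu$ (first bullet),
\[
\sr\bigl(T_{i_1} \otimes \cdots \otimes T_{i_n}\bigr) \leq \mu\bigl(T_{i_1} \otimes \cdots \otimes T_{i_n}\bigr)^{1/3} = \prod_{j=1}^n \mu(T_{i_j})^{1/3}.
\]

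Substituting this into the previous inequality and recognizing the sum over $(i_1,\dots,i_n)\in[k]^n$ as an $n$-fold product, I get
\[
\sr(T^{\otimes n}) \leq \sum_{(i_1, \ldots, i_n) \in [k]^n} \prod_{j=1}^n \mu(T_{i_j})^{1/3} = \left( \sum_{i=1}^k \mu(T_i)^{1/3} \right)^{\! n}.
\]
Taking $n$-th roots and then the $\limsup$ as $n \to \infty$ yields $\tilde{S}(T) \leq \sum_{i=1}^k \mu(T_i)^{1/3}$, as desired.

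There is no real obstacle here: the only subtle point is ensuring that each $T_i$ is minimal so that the cube-root-of-measure bound from the Fact applies, but this is handled by restricting to the support. The proof is essentially a one-line application of subadditivity plus a multinomial expansion.
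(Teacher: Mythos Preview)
Your proof is correct and follows essentially the same approach as the paper: expand $T^{\otimes n}$ multilinearly, apply subadditivity of slice rank, bound each term by $\mu(\cdot)^{1/3}$ using multiplicativity of $\mu$, and recognize the resulting sum as an $n$-th power. Your explicit remark about taking each $T_i$ minimal is a small clarification the paper leaves implicit.
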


\begin{proof}
Note that
$$T^{\otimes n} = \sum_{(P_1, \ldots, P_n) \in \{ T_1, \ldots, T_k\}^n} P_1 \otimes \cdots \otimes P_n.$$
It follows that
\begin{align*}\sr(T^{\otimes n}) &\leq \sum_{(P_1, \ldots, P_n) \in \{ T_1, \ldots, T_k\}^n} \sr(P_1 \otimes \cdots \otimes P_n) \\&\leq \sum_{(P_1, \ldots, P_n) \in \{ T_1, \ldots, T_k\}^n} \mu(P_1 \otimes \cdots \otimes P_n)^{1/3} \\&= \sum_{(P_1, \ldots, P_n) \in \{ T_1, \ldots, T_k\}^n} (\mu(P_1) \cdot \mu(P_2) \cdots \mu(P_n))^{1/3} \\&= (\mu(T_1)^{1/3} + \cdots + \mu(T_k)^{1/3})^{n},\end{align*}
which implies as desired that $\tilde{S}(T) \leq (\mu(T_1)^{1/3} + \cdots + \mu(T_k)^{1/3})$.
\end{proof}

\begin{remark}
\cite[Theorem~5.3]{aw}, in addition to bounding $\tilde{I}$ instead of $\tilde{S}$, also required that $T = T_1 + \cdots + T_k$ be a \emph{partition} of the terms of $T$. Here in Theorem~\ref{one} we are allowed any tensor sum, although in general a partition minimizes the resulting upper bound.
\end{remark}

\subsection{Generalization of \cite[Theorem~5.2]{aw}} \label{subsectwo}

This tool will be the most important in upper bounding the asymptotic slice rank of many tensors of interest. We show that a partitioning method similar to the Laser Method applied to a tensor $T$ can be used to prove upper bounds on $\tilde{S}(T)$. Recall the definitions and notation about partitions of tensors from Section~\ref{partitionnotation}.

\begin{theorem} \label{two} For any tensor $T$ and partition of its variable sets,
$$\tilde{S}(T) \leq \limsup_{p \in P(L)} \min\{ p_X, p_Y, p_Z \}.$$
\end{theorem}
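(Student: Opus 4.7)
\textbf{Proof plan for Theorem~\ref{two}.}

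The plan is to decompose $T^{\otimes n}$ according to the block structure inherited from the given partitions of $X,Y,Z$, group the pieces by their empirical ``type'', and bound the slice rank of each type-class using property (5) of Lemma~\ref{propertieslemma} (slice rank is at most the smallest variable-set size).

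First, since $T = \sum_{T_{ijk} \in L} T_{ijk}$ as a partition into blocks, we can expand
$$T^{\otimes n} = \sum_{\sigma \in L^n} T_{\sigma(1)} \otimes T_{\sigma(2)} \otimes \cdots \otimes T_{\sigma(n)},$$
where each \emph{schedule} $\sigma \in L^n$ picks a block at every coordinate. Crucially, the summand for $\sigma$ is a tensor whose variables are a \emph{product} of parts: if $\sigma(t) = T_{i_t j_t k_t}$, then its $X$-variables lie in $X_{i_1} \times \cdots \times X_{i_n} \subseteq X^n$, and similarly on the $Y$ and $Z$ sides. I would group schedules by their \emph{type} $\tau \in P(L)$, i.e. the empirical distribution $\tau(B) = \tfrac{1}{n}|\{t : \sigma(t) = B\}|$; the number of distinct types is only $\poly(n)$, and calling $T^{\otimes n}_\tau$ the sum of all schedules of type $\tau$, we get
$$\sr(T^{\otimes n}) \;\leq\; \sum_\tau \sr\!\left(T^{\otimes n}_\tau\right).$$

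Next I would compute the number of $X$-variables actually appearing in $T^{\otimes n}_\tau$. Since all schedules of type $\tau$ share the same $X$-marginal composition $\vec a = (n\tau(X_i))_i$, the set of $X$-variables touched is exactly $\{x \in X^n : |\{t : x_t \in X_i\}| = a_i \text{ for all } i\}$, of size $\binom{n}{\vec a}\prod_i |X_i|^{a_i}$. By the standard multinomial/entropy estimate $\binom{n}{\vec a} = \prod_i (1/\tau(X_i))^{n\tau(X_i)} \cdot 2^{o(n)}$, this equals
$$\prod_{i} \Bigl(\tfrac{|X_i|}{\tau(X_i)}\Bigr)^{n \tau(X_i)} \cdot 2^{o(n)} \;=\; \tau_X^{\,n} \cdot 2^{o(n)},$$
and analogous estimates hold for $Y$ and $Z$. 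By Lemma~\ref{propertieslemma}(5), the slice rank of $T^{\otimes n}_\tau$ is bounded by the smallest of its three variable counts, so $\sr(T^{\otimes n}_\tau) \le \min\{\tau_X,\tau_Y,\tau_Z\}^n \cdot 2^{o(n)}$.

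Summing over the $\poly(n)$ types and taking the $n$-th root gives
$$\sr(T^{\otimes n})^{1/n} \;\leq\; \poly(n)^{1/n} \cdot 2^{o(1)} \cdot \max_{\tau} \min\{\tau_X,\tau_Y,\tau_Z\},$$
where the $\max$ is over types achievable at stage $n$ (rational distributions with denominator $n$). Taking $\limsup_{n\to\infty}$, and using that $P(L)$ is compact and $p \mapsto \min\{p_X, p_Y, p_Z\}$ is continuous (with the usual convention $0\log 0 = 0$ so that the factors $(|X_i|/p(X_i))^{p(X_i)}$ extend continuously to $p(X_i)=0$), the achievable rational types become dense in $P(L)$ and the bound passes to $\sup_{p \in P(L)} \min\{p_X, p_Y, p_Z\}$, as required. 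The main obstacle I expect is purely bookkeeping: matching the entropy expansion of the multinomial coefficient with the expression $p_X = \prod_i (|X_i|/p(X_i))^{p(X_i)}$ and absorbing the $\poly(n) \cdot 2^{o(n)}$ error terms, but no new ideas beyond standard Stirling-type estimates are needed.
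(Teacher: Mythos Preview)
Your proposal is correct and follows essentially the same approach as the paper: expand $T^{\otimes n}$ over block-schedules in $L^n$, group by empirical type (of which there are only $\poly(n)$), bound the slice rank of each type-class by the minimum of its three variable-set sizes via the multinomial/Stirling estimate, and take $n$th roots. The only cosmetic difference is that you spell out the compactness/continuity step to pass from rational types to all of $P(L)$, whereas the paper simply notes $\max_{p\in D}\le \limsup_{p\in P(L)}$; your claim that the touched $X$-variables form ``exactly'' the described set is a slight overstatement (it is in general only a superset), but since only the upper bound is needed this does not affect the argument.
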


\begin{proof}
For any positive integer $n$, we can write
$$T^{\otimes n} = \sum_{(P_1, \ldots, P_n) \in L^n} P_1 \otimes \cdots \otimes P_n.$$
For a given $(P_1, \ldots, P_n) \in L^n$, let $dist(P_1, \ldots, P_n)$ be the probability distribution on $L$ which results from picking a uniformly random $\alpha \in [n]$ and outputting $P_\alpha$. For a probability distribution $p : L \to [0,1]$, define $L_{n,p} := \{ (P_1, \ldots, P_n) \in L^n \mid dist(P_1, \ldots, P_n) = p  \}$. Note that the number of $p$ for which $L_{n,p}$ is nonempty is only $\poly(n)$, since they are the distributions which assign an integer multiple of $1/n$ to each element of $L$. Let $D$ be the set of these probability distributions.

We can now rearrange:
$$T^{\otimes n} = \sum_{p \in D} \sum_{(P_1, \ldots, P_n) \in L_{n,p}}P_1 \otimes \cdots \otimes P_n.$$

Hence,
\begin{align*}
    \sr(T^{\otimes n}) &\leq \sum_{p \in D} \sr\left( \sum_{(P_1, \ldots, P_n) \in L_{n,p}}P_1 \otimes \cdots \otimes P_n \right) \\ &\leq \poly(n) \cdot \max_{p \in D} \sr\left( \sum_{(P_1, \ldots, P_n) \in L_{n,p}}P_1 \otimes \cdots \otimes P_n \right).
\end{align*}

For any probability distribution $p : L \to [0,1]$, let us count the number of x-variables used in $\left( \sum_{(P_1, \ldots, P_n) \in L_{n,p}}P_1 \otimes \cdots \otimes P_n \right)$. These are the tuples of the form $(x_1, \ldots, x_n) \in X^n$ where, for each $i \in [k_X]$, there are exactly $n \cdot p(X_i)$ choices of $j$ for which $x_j \in X_i$. The number of these is\footnote{Here, $\binom{n}{p_1 n, p_2 n, \ldots, p_\ell n } = \frac{n!}{(p_1 n)! (p_2 n)! \cdots (p_\ell n)!}$, with each $p_i \in [0,1]$ and $p_1 + \cdots + p_\ell = 1$, is the multinomial coefficient, with the known bound from Stirling's approximation, for fixed $p_i$s, that $\binom{n}{p_1 n, p_2 n, \ldots, p_\ell n} \leq  \left( \prod_i p_i^{- p_i} \right)^{n + o(n)} $. Throughout this paper we use the convention that $p_i^{p_i} = 1$ when $p_i = 0$.}
$$\binom{n}{n \cdot p(X_1), n \cdot p(X_2), \ldots, n \cdot p(X_{k_X})} \cdot \prod_{i \in [k_X]} |X_i|^{n \cdot p(X_i)}.$$
This is upper bounded by $p_X^{n + o(n)}$, where $p_X$ is the quantity defined in Section~\ref{partitionnotation}. It follows that $\xr \left( \sum_{(P_1, \ldots, P_n) \in L_{n,p}}P_1 \otimes \cdots \otimes P_n \right) \leq p_X^{n + o(n)}$. We can similarly argue about $\yr$ and $\zr$. Hence,
\begin{align*}
    \sr(T^{\otimes n}) &\leq \poly(n) \cdot \max_{p \in D} \sr\left( \sum_{(P_1, \ldots, P_n) \in L_{n,p}}P_1 \otimes \cdots \otimes P_n \right) \\ &\leq \poly(n) \cdot \max_{p \in D} \min\{ p_X, p_Y, p_Z\}^{n + o(n)} \\ &\leq \poly(n) \cdot \limsup_{p \in P(L)} \min\{ p_X, p_Y, p_Z\}^{n + o(n)}.
\end{align*}
Hence, $\sr(T^{\otimes n}) \leq \limsup_{p} \min\{ p_X, p_Y, p_Z\}^{n + o(n)}$, and the desired result follows.
\end{proof}

\begin{remark}
\cite[Theorem~5.2]{aw} is less general than our Theorem~\ref{two} in two ways: it used $\tilde{I}$ instead of $\tilde{S}$, and it required each $X_i, Y_j, Z_k$ to contain only one variable.
\end{remark}

\begin{remark}
Suppose $T$ is over $X,Y,Z$ with $|X|=|Y|=|Z|=q$. For any probability distribution $p$ we always have $p_X, p_Y, p_Z \leq q$, and moreover we only have $p_X = q$ when $p(X_i) = |X_i|/q$ for each $i$. Similar to~\cite[Corollary~5.1]{aw}, it follows that if no probability distribution $p$ is $\delta$-close (say, in $\ell_1$ distance) to having $p(X_i) = |X_i|/q$ for all $i$, $p(Y_j) = |Y_j|/q$ for all $j$, and $p(Z_k) = |Z_k|/q$ for all $k$, simultaneously, then we get $\tilde{S}(T) \leq q^{1 - f(\delta)}$ for some increasing function $f$ with $f(\delta)>0$ for all $\delta>0$. 
\end{remark}

We make a remark about applying Theorem~\ref{two} to variable-symmetric tensors. This remark has implicitly been used in past work on applying the Laser method, such as~\cite{coppersmith}, but we prove it here for completeness. Recall the notation in Section~\ref{symmetricnotation} about such tensors.

\begin{proposition} \label{symmprop}
Suppose $T$ is a variable-symmetric tensor over $X,Y,Z$, and $X = X_1 \cup \cdots \cup X_{k_X}$, $Y = Y_1 \cup \cdots \cup Y_{k_Y}$, $Z = Z_1 \cup \cdots \cup Z_{k_Z}$ are $T$-symmetric partitions. Then,
$$\tilde{S}(T) \leq \limsup_{p \in P^{sym}(L)} p_X.$$
\end{proposition}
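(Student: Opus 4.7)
The plan is to deduce Proposition~\ref{symmprop} directly from Theorem~\ref{two} by a symmetrization argument. Theorem~\ref{two} already gives
$$\tilde{S}(T) \leq \limsup_{p \in P(L)} \min\{p_X,p_Y,p_Z\},$$
so it suffices to show that for every $p \in P(L)$ there is a $T$-symmetric $p' \in P^{sym}(L)$ with $p'_X \geq \min\{p_X,p_Y,p_Z\}$. The idea is to average $p$ over cyclic rotations of the three variable roles: since $T$ is variable-symmetric and the partition is $T$-symmetric, the block $T_{jki}$ lies in $L$ whenever $T_{ijk}$ does, so we may define
$$p'(T_{ijk}) := \tfrac{1}{3}\bigl(p(T_{ijk}) + p(T_{jki}) + p(T_{kij})\bigr).$$
By construction $p'(T_{ijk}) = p'(T_{jki})$, so $p' \in P^{sym}(L)$, and summing $p(T_{jki})$ over $j,k$ gives $p(Y_i)$ (and similarly for the third term), so $p'(X_i) = \tfrac{1}{3}\bigl(p(X_i) + p(Y_i) + p(Z_i)\bigr)$.

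Next I would compare $p'_X$ with $p_X, p_Y, p_Z$ using concavity. Using the $T$-symmetry of the partition, $|X_i| = |Y_i| = |Z_i| =: n_i$, so
$$\log p'_X = \sum_i p'(X_i)\log\!\bigl(n_i/p'(X_i)\bigr) = \sum_i f_i\!\left(\tfrac{p(X_i)+p(Y_i)+p(Z_i)}{3}\right),$$
where $f_i(x) := x\log(n_i/x)$ has second derivative $-1/x$ and is therefore concave. Jensen's inequality then yields
$$\log p'_X \geq \tfrac{1}{3}\sum_i\bigl(f_i(p(X_i))+f_i(p(Y_i))+f_i(p(Z_i))\bigr) = \tfrac{1}{3}\bigl(\log p_X + \log p_Y + \log p_Z\bigr),$$
so $p'_X \geq (p_X p_Y p_Z)^{1/3} \geq \min\{p_X,p_Y,p_Z\}$. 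Combining this with Theorem~\ref{two} proves the claim.

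I expect the routine but delicate step to be checking that $p'$ genuinely lies in $P^{sym}(L)$, i.e.\ that the cyclic relabeling of indices really does permute $L$; this relies on the hypothesis $T_{jki} \simeq rot(T_{ijk})$ to ensure $T_{jki} \neq 0$ exactly when $T_{ijk} \neq 0$, and on the equalities $|X_i| = |Y_i| = |Z_i|$ so that $p(Y_i)$ appears in $p'(X_i)$ with the right normalization $n_i$. Once these bookkeeping items are in place, the concavity step is standard and the proposition follows without any further work.
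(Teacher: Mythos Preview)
Your proposal is correct and follows essentially the same route as the paper: both start from Theorem~\ref{two}, symmetrize an arbitrary $p$ by setting $p'(T_{ijk}) = \tfrac{1}{3}\bigl(p(T_{ijk})+p(T_{jki})+p(T_{kij})\bigr)$, and then show $p'_X \geq (p_X p_Y p_Z)^{1/3}$. The only cosmetic difference is that you invoke Jensen's inequality for the concave function $x\mapsto x\log(n_i/x)$, whereas the paper phrases the same step as the elementary inequality $a^a b^b c^c \geq d^{3d}$ with $d=(a+b+c)/3$; note also a harmless index swap in your parenthetical (summing $p(T_{jki})$ over $j,k$ actually gives $p(Z_i)$, and the third term gives $p(Y_i)$), which does not affect the stated formula for $p'(X_i)$.
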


\begin{proof}
We know from Theorem~\ref{two} that $\tilde{S}(T) \leq \limsup_{p \in P(L)} \min\{ p_X, p_Y, p_Z \}$. We will show that for any $p \in P(L)$, there is a $p' \in P^{sym}(L)$ such that $\min\{ p_X, p_Y, p_Z \} \leq \min\{ p'_X, p'_Y, p'_Z \}$, which means that in fact, $\tilde{S}(T) \leq \limsup_{p \in P^{sym}(L)} \min\{ p_X, p_Y, p_Z \}$. Finally, the desired result will follow since, for any $p' \in P^{sym}(L)$, we have $p'_X = p'_Y = p'_Z$. 

Consider any $p \in P(L)$, and define the distribution $p' \in P^{sym}(L)$ by $p'(T_{ijk}) := (p(T_{ijk}) + p(T_{jki}) + p(T_{kij}))/3$ for each $T_{ijk} \in L$. In order to show that $\min\{ p_X, p_Y, p_Z \} \leq p'_X$, we will show that $(p_X p_Y p_Z)^{1/3} \leq p'_X$:
\begin{align*} 
(p_X p_Y p_Z)^{1/3} &=  \prod_{i \in [k_X]} \left( \frac{|X_i|}{p(X_i)}\right)^{p(X_i)/3} \left( \frac{|Y_i|}{p(Y_i)}\right)^{p(Y_i)/3} \left( \frac{|Z_i|}{p(Z_i)}\right)^{p(Z_i)/3}  \\
&= \prod_{i \in [k_X]} \frac{|X_i|^{p'(X_i)}}{(p(X_i)^{p(X_i)} p(Y_i)^{p(Y_i)} p(Z_i)^{p(Z_i)})^{1/3}} \\
&\leq \prod_{i \in [k_X]} \frac{|X_i|^{p'(X_i)}}{p'(X_i)^{p'(X_i)}} \\ 
&= p'_X,
\end{align*}
where the second-to-last step follows from the fact that for any real numbers $a,b,c \in [0,1]$, setting $d = (a+b+c)/3$, we have $a^a b^b c^c \geq d^{3d}$.
\end{proof}

\subsection{Generalization of \cite[Theorem~5.1]{aw}}
The final remaining tool from \cite{aw}, their Theorem~5.1, turns out to be unnecessary for proving our tight lower bounds in the next section. Nonetheless, we sketch here how to extend it to give asymptotic slice rank upper bounds as well.

For a tensor $T$, let $m(T) := \max \{ \xr(T), \yr(T), \zr(T) \}$. Recall from Lemma~\ref{propertieslemma} that for any two tensors $A,B$ we have $\sr(A \otimes B) \leq \sr(A) \cdot m(B)$.

In general, for two tensors $A$ and $B$, even if $\tilde{S}(A)$ and  $\tilde{S}(B)$ are `small', it might still be the case that $\tilde{S}(A+B)$ is `large', much larger than $\tilde{S}(A) + \tilde{S}(B)$. For instance, for any positive integer $q$, define the tensors $T_1 := \sum_{i=0}^q x_0 y_i z_i$, $T_2 := \sum_{i=1}^{q+1} x_i y_0 z_i$, and $T_3 := \sum_{i=1}^{q+1} x_i y_i z_{q+1}$. We can see that $\tilde{S}(T_1) = \tilde{S}(T_2) = \tilde{S}(T_3) = 1$, but $T_1 + T_2 + T_3 = CW_{q}$, and we will show soon that $\tilde{S}(CW_q)$ grow unboundedly with $q$.

Here we show that if, not only is $\tilde{S}(A)$ small, but even $\xr(A)$ is small, then we can get a decent bound on  $\tilde{S}(A+B)$.

\begin{theorem} \label{removeanxtool}
Suppose $T,A,B$ are tensors such that $A + B = T$. Then,
$$\tilde{S}(T) \leq \left( \frac{m(A)}{(1-p) \cdot \xr(A)} \right)^{1-p} \cdot \frac{1}{p^p},$$
where $p \in [0,1]$ is given by
$$p := \frac{\log\left( \frac{\xr(B)}{\tilde{S}(B)} \right)}{ \log\left(\frac{m(A)}{\xr(A)}\right) + \log\left(\frac{\xr(B)}{\tilde{S}(B)}\right)}.$$
\end{theorem}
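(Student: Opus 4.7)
The plan is to expand $T^{\otimes n} = (A+B)^{\otimes n}$ and bound the slice rank of each summand in two complementary ways, then take the minimum and optimize at the resulting crossover.

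I would begin by writing $T^{\otimes n} = \sum_{J\subseteq [n]} A^{\otimes J} \otimes B^{\otimes [n]\setminus J}$ and grouping by $k=|J|$. By slice-rank subadditivity (Lemma~\ref{propertieslemma}(3)) and invariance of slice rank under permutations of tensor factors,
\[
\sr(T^{\otimes n}) \leq \sum_{k=0}^n \binom{n}{k}\,\sr\bigl(A^{\otimes k}\otimes B^{\otimes (n-k)}\bigr).
\]
For each $k$, I would bound the inner slice rank two ways. First, Lemma~\ref{propertieslemma}(4) applied with $A^{\otimes k}$ supplying the $m$-factor, combined with submultiplicativity of $\xr,\yr,\zr$ under tensor products from Lemma~\ref{propertieslemma}(2) (so $m(A^{\otimes k})\leq m(A)^k$), yields $\sr \leq m(A)^k\,\tilde{S}(B)^{(n-k)+o(n)}$. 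Second, by Lemma~\ref{propertieslemma}(1,2), $\sr \leq \xr \leq \xr(A)^k\,\xr(B)^{n-k}$. Equating the two bounds gives $(m(A)/\xr(A))^k = (\xr(B)/\tilde{S}(B))^{n-k}$, whose solution is exactly $k/n = p$, matching the value in the statement; this identifies $k=pn$ as the crossover where the two bounds coincide.

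The remaining step is to sum and take the $n$-th root. I would split the sum at $k=pn$, using the first bound for $k\leq pn$ and the second for $k>pn$, apply Stirling's approximation $\binom{n}{pn}^{1/n}\to p^{-p}(1-p)^{-(1-p)}$ to the binomial coefficient at the crossover, and substitute via the crossover identity $(m(A)/\xr(A))^p = (\xr(B)/\tilde{S}(B))^{1-p}$ to convert the common crossover value into an expression involving only the ratio $m(A)/\xr(A)$. Combined with the entropy factor, this produces the claimed bound $\bigl(m(A)/((1-p)\xr(A))\bigr)^{1-p}\cdot 1/p^p$.

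The main obstacle I anticipate is twofold. The first is the algebraic bookkeeping to extract precisely the stated form from the crossover value via the identity defining $p$, which controls exactly how the $\tilde{S}(B)$ and $\xr(B)$ terms get eliminated in favor of $m(A)/\xr(A)$ and the entropy $p^{-p}(1-p)^{-(1-p)}$. The second is an analytic step: verifying that each partial sum is controlled by its value at the boundary $k=pn$ rather than by the unconstrained interior maxima of the individual binomial-weighted bounds (located at $nm(A)/(m(A)+\tilde{S}(B))$ and $n\xr(A)/(\xr(A)+\xr(B))$ respectively), which requires a monotonicity argument when these interior maxima lie on the ``wrong'' side of the crossover.
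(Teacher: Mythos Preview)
Your proposal is correct and follows essentially the same route as the paper: expand $(A+B)^{\otimes n}$, bound each $\sr(A^{\otimes k}\otimes B^{\otimes(n-k)})$ by $\min\{\xr(A)^k\xr(B)^{n-k},\,m(A)^k\tilde S(B)^{n-k}\}$, and observe that $\binom{n}{k}$ times this minimum is maximized at $k=pn$. The paper's write-up is terser, simply citing \cite{aw} for the final optimization step, whereas you correctly flag the monotonicity check (that the two interior maxima of the binomial-weighted bounds lie on the correct sides of the crossover) as the nontrivial analytic point being deferred.
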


\begin{proof}
We begin by, for any integers $n \geq k \geq 0$, giving bounds on $\sr(A^{\otimes k} \otimes B^{\otimes (n-k)})$. First, since $\xr$ is submultiplicative, we have $$\sr(A^{\otimes k} \otimes B^{\otimes (n-k)}) \leq \xr(A^{\otimes k} \otimes B^{\otimes (n-k)}) \leq \xr(A)^k \cdot \xr(B)^{n-k}.$$
Second, from the definition of $m$, we have $$\sr(A^{\otimes k} \otimes B^{\otimes (n-k)}) \leq m(A^{\otimes k}) \cdot \sr(B^{\otimes (n-k)})  \leq m(A)^k \cdot \tilde{S}(B)^{n-k}.$$

It follows that for any positive integer $n$ we have
$$S(T^{\otimes n}) \leq \sum_{k=0}^n \binom{n}{k} \cdot \sr(A^{\otimes k} \otimes B^{\otimes (n-k)}) \leq \sum_{k=0}^n \binom{n}{k} \cdot \min \{ \xr(A)^k \cdot \xr(B)^{n-k}, m(A)^k \cdot \tilde{S}(B)^{n-k} \}.$$

As in the proof of~\cite[Theorem~5.1]{aw}, we can see that the quantity $\binom{n}{k} \cdot \min \{ \xr(A)^k \cdot \xr(B)^{n-k}, m(A)^k \cdot \tilde{S}(B)^{n-k} \}$ is maximized at $k = pn$, and the result follows.
\end{proof}

\begin{remark}
This result generalizes \cite[Theorem~5.1]{aw}, no longer requiring that $A$ be the tensor $T$ restricted to a single $x$-variable. In \cite[Theorem~5.1]{aw}, since $A$ is $T$ restricted to a  single $x$-variable, and we required $A$ to have at most $q$ terms, we got the bounds $\xr(A) = 1$ and $m(A) \leq q$. Similarly, $B$ had at most $q-1$ different $x$-variables, so $\xr(B) \leq q-1$. Substituting those values into Theorem~\ref{removeanxtool} yields the original~\cite[Theorem~5.1]{aw} with $\tilde{I}$ replaced by $\tilde{S}$. 
\end{remark}

\section{Computing the Slice Ranks for Tensors of Interest} \label{secfour}

In this section, we give slice rank upper bounds for a number of tensors of interest. It will follow from Section~\ref{lasersec} that \emph{all of the bounds we prove in this Section are tight}.

\subsection{Generalized Coppersmith-Winograd Tensors} \label{secgcw}

We begin with the generalized CW tensors defined in~\cite{aw}, which for a positive integer $q$ and a permutation $\sigma : [q] \to [q]$ are given by
$$CW_{q,\sigma} := x_0 y_0 z_{q+1} + x_0 y_{q+1} z_0 + x_{q+1} y_0 z_0 + \sum_{i=1}^q (x_i y_{\sigma(i)} z_0 + x_i y_0 z_i + x_0 y_i z_i).$$

The usual Coppersmith-Winograd tensor $CW_q$ results by setting $\sigma$ to the identity permutation. Just as in \cite[Section~7.1]{aw}, we can see that Theorems~\ref{one}~and~\ref{two} immediately apply to $CW_{q,\sigma}$ to show that there is a universal constant $\delta>0$ such that for any $q$ and $\sigma$ we have $\tilde{S}(CW_{q,\sigma}) \leq (q+2)^{1-\delta}$, and hence a universal constant $c>2$ such that $\omega_u(CW_{q,\sigma}) \geq c$. Indeed, by proceeding in this way, we get the exact same constants as in~\cite{aw}.

That said, we will now use Theorem~\ref{two} to prove that $c \geq 2.16805$. (In fact, essentially the same argument as we present now shows that \cite[Theorem 5.2]{aw} was already sufficient to show the weaker claim that $\omega_g(CW_{q,\sigma}) \geq 2.16805$).

We begin by partitioning the variable sets of $CW_{q,\sigma}$, using the notation of Theorem~\ref{two}. Let $X_0 = \{ x_0 \}$, $X_1 = \{ x_1, \ldots, x_q \}$, and $X_2 = \{ x_{q+1} \}$, so that $X_0 \cup X_1 \cup X_2$ is a partition of the $x$-variables of $CW_{q,\sigma}$.\footnote{The sets of partitions were 1-indexed before, but we 0-index here for notational consistency with past work.} Similarly, let $Y_0 = \{ y_0 \}$, $Y_1 = \{ y_1, \ldots, y_q \}$, $Y_2 = \{ y_{q+1} \}$, $Z_0 = \{ z_0 \}$, $Z_1 = \{ z_1, \ldots, z_q \}$, and $Z_2 = \{ z_{q+1} \}$. We can see this is a $CW_{q,\sigma}$-symmetric partition with $L = \{ T_{002}, T_{020}, T_{200}, T_{011}, T_{101}, T_{110} \}$. 

Consider any probability distribution $p \in P^{sym}(L)$. By symmetry, we know that $p(T_{002}) = p(T_{020}) = p(T_{200}) = v$ and $p(T_{011}) = p(T_{101}) = p(T_{110}) = 1/3 - v$ for some value $v \in [0,1/3]$. Applying Theorem~\ref{two}, and in particular Proposition~\ref{symmprop}, yields:

$$\tilde{S}(CW_q) \leq \sup_{v \in [0,1/3]} \frac{q^{2(1/3 - v)}}{v^v (2/3 - 2v)^{2/3 - 2v} (1/3 + v)^{1/3+v}}.$$

In fact, we will see in the next section that this is tight (i.e. the value above is \emph{equal} to $\tilde{S}(CW_q)$, not just an upper bound on it). The values for the first few $q$ can be computed using optimization software as follows:

\begin{center}
\begin{tabular}{l|l}
$q$ & $\tilde{S}(CW_{q,\sigma})$ \\ \hline
$1$                                & $2.7551\cdots$                                                        \\
$2$                                & $3.57165\cdots$                                                        \\
$3$                                & $4.34413\cdots$                                                        \\
$4$                                & $5.07744\cdots$                                                        \\
$5$                                & $5.77629\cdots$                                                           \\
$6$                                & $6.44493\cdots$                                                       \\
$7$                                & $7.08706\cdots$                                                       \\
$8$                                & $7.70581\cdots$                                                   
\end{tabular}
\end{center}

Finally, using the lower bound $\tilde{R}(CW_{q,\sigma}) \geq q+2$ (in fact, it is known that $\tilde{R}(CW_{q,\sigma}) = q+2$), and the upper bound on $\tilde{S}(CW_{q,\sigma})$ we just proved, we can apply Theorem~\ref{thm:symmSOmega} to give lower bounds $\omega_u(CW_{q,\sigma}) \geq 2 \log(\tilde{R}(CW_{q,\sigma})) / \log(\tilde{S}(CW_{q,\sigma})) \geq 2 \log(q+2) / \log(\tilde{S}(CW_{q,\sigma}))$ as follows:

\begin{center}
\begin{tabular}{l|l}
$q$ & Lower Bound on $\omega_u(CW_{q,\sigma})$ \\ \hline
$1$                                & $2.16805\cdots$                                                        \\
$2$                                & $2.17794\cdots$                                                        \\
$3$                                & $2.19146\cdots$                                                        \\
$4$                                & $2.20550\cdots$                                                        \\
$5$                                & $2.21912\cdots$                                                           \\
$6$                                & $2.23200\cdots$                                                       \\
$7$                                & $2.24404\cdots$                                                       \\
$8$                                & $2.25525\cdots$                                                        
\end{tabular}
\end{center}

It is not hard to see that the resulting lower bound on $\omega_u(CW_{q,\sigma})$ is increasing with $q$ and is always at least $2.16805\ldots$ (see Appendix~\ref{appone} below for a proof), and hence that for any $q$ and any $\sigma$ we have $\omega_u(CW_{q,\sigma}) \geq 2.16805$ as desired.

\subsection{Generalized Simple Coppersmith-Winograd Tensors}

Similar to $CW_{q,\sigma}$, we can define for a positive integer $q$ and a permutation $\sigma : [q] \to [q]$ the simple Coppersmith-Winograd tensor $cw_{q,\sigma}$ given by:
$$cw_{q,\sigma} := \sum_{i=1}^q (x_i y_{\sigma(i)} z_0 + x_i y_0 z_i + x_0 y_i z_i).$$

These tensors, when $\sigma$ is the identity permutation $id$, are well-studied. For instance, Coppersmith and Winograd~\cite{coppersmith} showed that if $\tilde{R}(cw_{2,id}) = 2$ then $\omega=2$.

We will again give a tight bound on $\tilde{S}(cw_{q,\sigma})$ using Theorem~\ref{two} combined with the next section. To apply Theorem~\ref{two}, and in particular Proposition~\ref{symmprop}, we again pick a partition of the variables. Let $X_0 = \{ x_0 \}$, $X_1 = \{ x_1, \ldots, x_q \}$, $Y_0 = \{ y_0 \}$, $Y_1 = \{ y_1, \ldots, y_q \}$,  $Z_0 = \{ z_0 \}$, and $Z_1 = \{ z_1, \ldots, z_q \}$. This is a $cw_{q,\sigma}$-symmetric partition with $L = \{ T_{011}, T_{101}, T_{110} \}$. There is a unique $p \in P^{sym}(L)$, which assigns probability $1/3$ to each part. It follows that

$$\tilde{S}(cw_{q,\sigma}) \leq (1/3)^{-1/3} (2/3)^{-2/3} \cdot q^{2/3} = \frac{3}{2^{2/3}} \cdot q^{2/3}.$$

Again, we will see in the next section that this bound is tight. Using the lower bound $\tilde{R}(cw_{q,\sigma}) \geq q+1$, we get the lower bound
$$\omega_u(cw_{q,\sigma}) \geq 2 \frac{\log(q+1) }{ \log\left(\frac{3}{2^{2/3}} \cdot q^{2/3}\right)}.$$

The first few values are as follows; note that we cannot get a bound better than $2$ when $q=2$ because of Coppersmith and Winograd's remark.

\begin{center}
\begin{tabular}{l|l}
$q$ & Lower Bound on $\omega_u(cw_{q,\sigma})$ \\ \hline
$1$                                & $2.17795\cdots$                                                        \\
$2$                                & $2$                                                        \\
$3$                                & $2.02538\cdots$                                                        \\
$4$                                & $2.06244\cdots$                                                        \\
$5$                                & $2.09627\cdots$                                                           \\
$6$                                & $2.12549\cdots$                                                       \\
$7$                                & $2.15064\cdots$                                                   
\end{tabular}
\end{center}

\subsection{Cyclic Group Tensors}

We next look at two tensors which were studied in \cite{cohn2003group}, \cite{almanitcs}, and \cite[Section~7.3]{aw}. For each positive integer $q$, define the tensor $T_q$ (the structural tensor of the cyclic group $C_q$) as:
$$T_q=\sum_{i=0}^{q-1}\sum_{j=0}^{q-1} x_i y_j z_{i+j\bmod q}.$$
Define also the lower triangular version of $T_q$, called $T_q^{lower}$, as:
$$T_q^{lower}=\sum_{i=0}^{q-1}\sum_{j=0}^{q-1-i} x_i y_j z_{i+j}.$$

 While Theorem~\ref{two} does not give any nontrivial upper bounds on $\tilde{S}(T_q)$, it does give nontrivial upper bounds on $\tilde{S}(T_q^{lower})$, as noted in \cite[Section~7.3]{aw}. Using computer optimization software, we can compute our lower bound on $\tilde{S}(T_q^{lower})$, using Theorem~\ref{two} where each partition contains exactly one variable, for the first few values of $q$:

\begin{center}
\begin{tabular}{l|l}
$q$ & Upper Bound on $\tilde{S}(T_q^{lower})$ \\ \hline
$2$                                & $1.88988\cdots$                                                        \\
$3$                                & $2.75510\cdots$                                                        \\
$4$                                & $3.61071\cdots$                                                        \\
$5$                                & $4.46157\cdots$                                                     
\end{tabular}
\end{center}

We show in the next section that these numbers are also tight. It is known (see e.g.~\cite{almanitcs}) that $\tilde{R}(T_q) = \tilde{R}(T_q^{lower}) = q$. Thus we get the following lower bounds on $\omega_u(T_q^{lower}) \geq 2 \log(q) / \log(\tilde{S}(T_q^{lower}))$:

\begin{center}
\begin{tabular}{l|l}
$q$ & Lower Bound on $\omega_u(T_q^{lower})$ \\ \hline
$2$                                & $2.17795\cdots$                                                        \\
$3$                                & $2.16805\cdots$                                                        \\
$4$                                & $2.15949\cdots$                                                        \\
$5$                                & $2.15237\cdots$                                                     
\end{tabular}
\end{center}

These numbers match the lower bounds obtained by \cite{almanitcs, blasiak} in their study of $T_q$; our Theorem~\ref{two} can be viewed as an alternate tool to achieve those lower bounds. The bound approaches $2$ as $q \to \infty$, as it is known that $\log(\sr(T_q))/\log(q) = 1 - o(1)$ as $q \to \infty$.  Interestingly, it is shown in \cite[Theorem~4.16]{quantum} that $T_q^{lower}$ degenerates to $T_q$ over the field $\F_q$, which implies that our bounds above also hold for $T_q$ over $\F_q$.

\subsection{The Value of the Subtensor $t_{112}$ of $CW_q^{\otimes 2}$} \label{t112}

A key tensor which arises in applying the Laser method to increasing powers of $CW_q$, including \cite{coppersmith, v12,legall,legallrect, legallrect2}, is the tensor $t_{112}$ which (for a given positive integer $q$) is given by

$$t_{112} := \sum_{i=1}^q x_{i,0} y_{i,0} z_{0,q+1} + \sum_{k=1}^q x_{0,k} y_{0,k} z_{q+1,0} + \sum_{i,k=1}^q x_{i,0} y_{0,k} z_{i,k} + \sum_{i,k = 1}^q x_{0,k} y_{i,0} z_{i,k}.$$

Coppersmith-Winograd~\cite{coppersmith} and future work studied the value of this tensor. In~\cite{coppersmith} it is shown that for every $\tau \in [2/3, 1]$,

$$V_\tau(t_{112}) \geq 2^{2/3} q^\tau (q^{3\tau} + 2)^{1/3}.$$

This bound has been used in all the subsequent work using $CW_q$, without improvement. Here we show it is tight and cannot be improved in the case $\tau=2/3$:

\begin{proposition} \label{valueprop}
$V_{2/3}(t_{112}) = 2^{2/3} q^{2/3} (q^{2} + 2)^{1/3}.$
\end{proposition}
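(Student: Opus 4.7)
The key observation is that the claimed upper bound equals $\mu(t_{112})^{1/3}$, since
$\mu(t_{112}) = |X_{t_{112}}| \cdot |Y_{t_{112}}| \cdot |Z_{t_{112}}| = (2q)(2q)(q^2+2) = 4q^2(q^2+2)$
and $(4q^2(q^2+2))^{1/3} = 2^{2/3} q^{2/3} (q^2+2)^{1/3}$. Since the Coppersmith--Winograd construction already gives the matching lower bound, it suffices to prove the general inequality $V_{2/3}(T) \leq \mu(T)^{1/3}$ and specialize to $T=t_{112}$.

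My plan is to pass through the variable-symmetric ``rotational cube'' $T' := T \otimes \mathrm{rot}(T) \otimes \mathrm{rot}^2(T)$, which satisfies $\mu(T') = \mu(T)^3$ and $|X_{T'}|=|Y_{T'}|=|Z_{T'}| = \mu(T)$. First, supermultiplicativity of $V_{2/3}$ under tensor product---obtained by tensoring separate degenerations and observing that $V_{2/3}$ of a direct sum of matrix multiplication tensors is additive---together with the rotation invariance $V_{2/3}(\mathrm{rot}(T))=V_{2/3}(T)$ gives $V_{2/3}(T)^3 \leq V_{2/3}(T')$. The heart of the argument is the claim that $V_{2/3}(T'') \leq \tilde S(T'')$ for every variable-symmetric tensor $T''$: I would derive this by invoking the power mean bound $V_\tau(T'') \geq V_{2/3}(T'')^{3\tau/2}$ from Subsection~\ref{subsec:value} at $\tau = \omega_u(T'')/3 \in [2/3,1]$, using the boundary identity $V_{\omega_u/3}(T'') = \tilde R(T'')$ from the definition of $\omega_u$ to rewrite this as $V_{2/3}(T'') \leq \tilde R(T'')^{2/\omega_u(T'')}$, and finally applying Theorem~\ref{thm:symmSOmega} for variable-symmetric $T''$, which rearranges to $\tilde R(T'')^{2/\omega_u(T'')} \leq \tilde S(T'')$. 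Second, Lemma~\ref{propertieslemma}(5) applied to powers of $T'$ (noting $\sr((T')^{\otimes n}) \leq \min\{|X_{T'}|,|Y_{T'}|,|Z_{T'}|\}^n = \mu(T)^n$) supplies the trivial bound $\tilde S(T') \leq \mu(T')^{1/3} = \mu(T)$.

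Chaining the three inequalities gives $V_{2/3}(T)^3 \leq V_{2/3}(T') \leq \tilde S(T') \leq \mu(T)$, so $V_{2/3}(T) \leq \mu(T)^{1/3}$. Specialized to $T=t_{112}$ this matches the Coppersmith--Winograd lower bound exactly, completing the proof. The main technical obstacle is establishing the variable-symmetric inequality $V_{2/3}(T'') \leq \tilde S(T'')$: although each ingredient (the power mean inequality, the definition of $\omega_u$, and Theorem~\ref{thm:symmSOmega}) is already in hand, one must carefully justify the semicontinuity property that underlies the boundary identity $V_{\omega_u/3}(T'')=\tilde R(T'')$ so that the power mean can be used as an equality-type rearrangement. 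By comparison, the remaining steps---supermultiplicativity of $V_{2/3}$ on tensor products, the $\mu$ identities for rotations and tensor products, and the trivial slice-rank bound---are routine.
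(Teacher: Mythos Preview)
Your overall architecture matches the paper's: pass to the rotational cube $t_s := t_{112}\otimes \mathrm{rot}(t_{112})\otimes \mathrm{rot}^2(t_{112})$, show $V_{2/3}(t_{112})^3 \le \tilde S(t_s)$, and then upper bound $\tilde S(t_s)$ by $4q^2(q^2+2)$. Where you diverge is in how you carry out the two ingredients.

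For the upper bound on $\tilde S(t_s)$ your route is actually \emph{simpler} than the paper's. The paper applies Theorem~\ref{two} with a carefully chosen block partition of $t_{112}$ and solves the resulting one-parameter optimization to obtain $4q^2(q^2+2)$; you instead observe that the rotational cube has all three variable sets of size $\mu(t_{112})=4q^2(q^2+2)$ and invoke Lemma~\ref{propertieslemma}(5). This is a genuine shortcut that yields the identical numerical bound, and it reveals the pleasant identity $V_{2/3}(T)\le \mu(T)^{1/3}$ in one line.

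For the inequality $V_{2/3}(T'')\le \tilde S(T'')$ (with $T''=t_s$), your detour through $\omega_u$ and Theorem~\ref{thm:symmSOmega} is more roundabout than necessary, and the ``boundary identity'' you flag is a real soft spot: from the definition of $\omega_u$ as an infimum one gets $V_{c/3}(T'')<\tilde R(T'')$ only for $c<\omega_u(T'')$, so your chain yields $V_{2/3}(T'')\le \tilde R(T'')^{2/\omega_u(T'')}$ by letting $c\uparrow\omega_u(T'')$ \emph{only when} $\omega_u(T'')>2$; the case $\omega_u(T'')=2$ needs a separate argument. The paper avoids this entirely by arguing directly: for variable-symmetric $T''$, the defining degenerations can be symmetrized (as in~\cite{coppersmith}, exactly the step inside the proof of Theorem~\ref{thm:symmSOmega}) to land in $\bigoplus_i\langle a_i,a_i,a_i\rangle$, and then Corollary~\ref{cortopropthree} gives $\tilde S(T'')\ge \sum_i a_i^2$. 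This direct route is what your use of Theorem~\ref{thm:symmSOmega} is ultimately invoking under the hood anyway, so you may as well call it explicitly and bypass the semicontinuity issue.
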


\begin{proof}
Consider the variable-symmetric tensor $t_s := t_{112} \otimes rot(t_{112}) \otimes rot(rot(t_{112}))$.
As in~\cite{coppersmith}, by definition of $V_{2/3}$, for every $\delta>0$ there is a positive integer $n$ such that $t_s^{\otimes n}$ has a degeneration to $\bigoplus_i \langle a_i, a_i, a_i \rangle$ for values such that $\sum_i a_i^2 \geq (V_{2/3}(T_{112}))^{3n(1-\delta)}$. In particular, by Corollary~\ref{cortopropthree} this yields the bound $$\tilde{S}(t_s^{\otimes n}) \geq \sum_i a_i^2 \geq (V_{2/3}(t_{112}))^{3n(1-\delta)}.$$ Since this holds for all $\delta>0$, it follows that $\tilde{S}(t_s)  \geq (V_{2/3}(t_{112}))^3 \geq 2^{2} q^{2} (q^{2} + 2)$.

We now upper bound $\tilde{S}(t_s)$ using Theorem~\ref{two}. Although we are analyzing $t_s$, we will make use of a partition of the variables of $t_{112}$. The partition is as follows: $X_0 = \{ x_{i,0} \mid i \in [q] \}$, $X_1 = \{ x_{0,k} \mid k \in [q] \}$, $Y_0 = \{ y_{i,0} \mid i \in [q] \}$, $Y_1 = \{ y_{0,k} \mid k \in [q] \}$, $Z_0 = \{ z_{i,k} \mid i,k \in [q] \}$, $Z_1 = \{ z_{0,q+1} \}$, and $Z_2 = \{ z_{q+1,0} \}$. Hence, $L = \{ T_{001}, T_{112}, T_{010}, T_{100} \}$. As in \cite{coppersmith}, and similar to Proposition~\ref{symmprop}, since $t_s$ is defined as $t_s := t_{112} \otimes rot(t_{112}) \otimes rot(rot(t_{112}))$, it follows that $\tilde{S}(t_s) \leq \limsup_{p \in P(L)} p_X \cdot p_Y \cdot p_Z$.
We can assume, again by symmetry, that any probability distribution $p$ on $L$ assigns the same value $v$ to $T_{010}$ and $T_{100}$, and the same value $1/2 - v$ to $T_{001}$ and $T_{112}$. We finally get the bound:

$$\tilde{S}(t_s) \leq \limsup_{v \in [0,1/2]} \left( 2q \right)^2 \cdot \frac{(q^2)^{2v}}{(2v)^{2v} (1/2-v)^{1-2v}}.$$

This is maximized at $v = q^2 / (2q^2+2)$, which yields exactly $\tilde{S}(t_s) \leq 2^{2} q^{2} (q^{2} + 2)$. The desired bound follows.
\end{proof}

The only upper bound we are able to prove on $V_{\tau}$ for $\tau>2/3$ is the straightforward $V_\tau(t_{112}) \leq V_{2/3}(t_{112})^{3 \tau/2} = 2^{\tau} q^{\tau} (q^{2} + 2)^{\tau/2}$, which is slightly worse than the best known lower bound $V_\tau(t_{112}) \geq 2^{2/3} q^\tau (q^{3\tau} + 2)^{1/3}$. It is an interesting open problem to prove tight upper bounds on $V_\tau(T)$ for any nontrivial tensor $T$ and value $\tau>2/3$. $T = t_{112}$ may be a good candidate since the Laser method seems unable to improve $V_\tau(t_{112})$ for any $\tau$, even when applied to any small tensor power $t_{112}^{\otimes n}$.

Notice that we were able to prove a tight bound on $\tilde{S}(t_s)$ here: the upper bound we proved matches a lower bound which we were able to derive from Coppersmith-Winograd's analysis (which made use of the Laser Method) of $V_\tau(t_{112})$. In the next section we will substantially generalize this fact, by showing a tight bound on $\tilde{S}(T)$ for any tensor $T$ to which the Laser Method applies.

\section{Slice Rank Lower Bounds via the Laser Method} \label{lasersec}

In this section, we show that the Laser Method can be used to give matching upper and lower bounds on $\tilde{S}(T)$ for any tensor $T$ to which it applies. We will build off of Theorem~\ref{two}, which we will show matches the bounds which arise in the Laser Method.

Consider any tensor $T$ which is minimal over $X,Y,Z$, and let $X = X_1 \cup \cdots \cup X_{k_X}$, $Y = Y_1 \cup \cdots \cup Y_{k_Y}$, $Z = Z_1 \cup \cdots \cup Z_{k_Z}$ be partitions of the three variable sets. Define $T_{ijk}$, $L$, and $p_X$ for a probability distribution $p$ on $L$, as in the top of Subsection~\ref{partitionnotation}. Recall in particular that $T_{ijk}$ is $T$ restricted to the variable sets $X_i$, $Y_j$, and $Z_k$.

\begin{definition} \label{laserable}
We say that $T$, along with partitions of $X,Y,Z$, is a \emph{laser-ready tensor partition} if the following three conditions are satisfied:
\begin{enumerate}[(1)]
    \item For every $(i,j,k) \in [k_X] \times [k_Y] \times [k_Z]$, either $T_{ijk} = 0$, or else $T_{ijk}$ has a degeneration to a tensor $\langle a,b,c \rangle$ with $ab = |X_i|$, $bc = |Y_j|$, and $ca = |Z_k|$ (i.e. a matrix multiplication tensor which is as big as possible given $|X_i|$, $|Y_j|$, and $|Z_k|$).
    \item There is an integer $\ell$ such that $T_{ijk} \neq 0$ only if $i+j+k=\ell$.
    \item $T$ is variable-symmetric, and the partitions are $T$-symmetric.
\end{enumerate}
\end{definition}

These conditions are exactly those for which the original Laser Method used by Coppersmith and Winograd~\cite{coppersmith} applies to $T$. We note that condition (3) is a simplifying assumption rather than a real condition on $T$: for any tensor $T$ and partitions satisfying conditions (1) and (2), the tensor $T' := T \otimes rot(T) \otimes rot(rot(T))$ along with the corresponding product partitions, satisfies all three conditions, gives at least as good a bound on $\omega$ using the Laser Method as $T$ and the original partitions, and more generally has $\omega_u(T') \leq \omega_u(T)$.

\begin{theorem}[\cite{coppersmith,stothers,v12}] \label{laserthm}
Suppose $T$, along with the partitions of $X,Y,Z$, is a laser-ready tensor partition. Then, for any distribution $p\in P^{sym}(L)$, and any positive integer $n$, the tensor $T^{\otimes n}$ has a degeneration into $$\left(\prod_{i \in [k_X]} p(X_i)^{-p(X_i)}\right)^{n - o(n)} \odot \langle a,a,a \rangle,$$
where 
$$a = \left( \prod_{T_{ijk} \in L} |X_i| ^{p(T_{ijk})} \right)^{n/2 - o(n)}.$$
\end{theorem}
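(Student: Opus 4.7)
The plan is to carry out the standard Laser-Method construction: decompose $T^{\otimes n}$ into blocks, zero out all blocks whose type does not match $p$, sparsify so the remaining blocks have pairwise disjoint variable sets, and then recognize the result as a direct sum of matrix multiplication tensors. Expand
$$T^{\otimes n} = \sum_{(P_1,\ldots,P_n)\in L^n} P_1\otimes\cdots\otimes P_n,$$
so that each summand is supported on a Cartesian-product block $(X_{i_1}\times\cdots\times X_{i_n})\times(Y_{j_1}\times\cdots\times Y_{j_n})\times(Z_{l_1}\times\cdots\times Z_{l_n})$; since the $X_i$ partition $X$, two distinct x-sequences $\vec i\ne\vec i'$ already correspond to disjoint x-variable sets, and similarly on the $Y$- and $Z$-sides. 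First apply a monomial restriction that retains only those $\vec x\in X^n$, $\vec y\in Y^n$, $\vec z\in Z^n$ whose coordinate multiplicities equal $np(X_i)$, $np(Y_j)$, $np(Z_k)$ respectively. After this restriction, the surviving terms are indexed by triples $(\vec i,\vec j,\vec l)$ achieving those marginals, and condition~(2) forces $i_k+j_k+l_k=\ell$ for every $k$, so any two of $\vec i,\vec j,\vec l$ determine the third.

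The central step is to select a subset $\mathcal S$ of these triples so that the three projected sequence-families $\{\vec i^s\}$, $\{\vec j^s\}$, $\{\vec l^s\}$ are each pairwise distinct; only then do the associated blocks use pairwise disjoint x-, y-, and z-variables and combine into a direct sum. Because condition~(2) makes $\vec i+\vec j+\vec l=\ell\cdot\vec 1$ a linear constraint, two-triple collisions correspond to nontrivial three-term arithmetic progressions in an auxiliary abelian group, and the Behrend/Salem-Spencer construction (as used by Strassen and Coppersmith-Winograd) delivers an $\mathcal S$ of size at least $N_X^{1-o(1)}$, where
$$N_X = \binom{n}{np(X_1),\ldots,np(X_{k_X})} = \Bigl(\prod_{i\in[k_X]} p(X_i)^{-p(X_i)}\Bigr)^{n-o(n)}$$
counts the distinct x-sequences with the prescribed marginals. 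This step is the main obstacle: without condition~(2) one would have to defeat three independent collision conditions simultaneously, which in general costs an exponential factor; the Salem-Spencer trick exactly exploits (2) to absorb this cost into the $n-o(n)$ exponent.

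For each chosen triple, condition~(1) lets me degenerate $T_{i_1j_1l_1}\otimes\cdots\otimes T_{i_nj_nl_n}$ to a matrix multiplication tensor of dimensions $(\prod_k a_k,\prod_k b_k,\prod_k c_k)$ with $a_kb_k=|X_{i_k}|$, $b_kc_k=|Y_{j_k}|$, $c_ka_k=|Z_{l_k}|$, so $(a_kb_kc_k)^2=|X_{i_k}||Y_{j_k}||Z_{l_k}|$. Summing over the $|\mathcal S|$ surviving triples produces a direct sum of such matrix multiplication blocks; applying the standard cubing trick of tensoring the whole construction with its two rotations, together with condition~(3) and the $T$-symmetry of $p$ (exactly as in the proof of Theorem~\ref{thm:symmSOmega}), forces the three dimension-totals to agree up to subexponential error and yields a direct sum of cubes $\langle a,a,a\rangle$ with
$$a = \Bigl(\prod_{T_{ijk}\in L} |X_i|^{p(T_{ijk})}\Bigr)^{n/2-o(n)}.$$
Absorbing the Stirling polynomial factors and the subexponential Salem-Spencer loss into the $n-o(n)$ and $n/2-o(n)$ exponents then gives exactly the claimed degeneration.
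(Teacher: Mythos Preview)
Your sketch is correct and follows the standard Laser Method construction that the paper simply cites from \cite{coppersmith,stothers,v12}. The paper's own proof is much terser: rather than redoing the construction, it invokes the general Laser Method and isolates the one point that needs checking --- that condition~(1) forces $V_\tau(p')=\prod_{T_{ijk}\in L}(|X_i||Y_j||Z_k|)^{p'(T_{ijk})\tau/2}$ to depend only on the \emph{marginals} of $p'$, so the usual loss incurred when several joint distributions share the same marginals disappears. You use exactly this fact in your final paragraph (your dimension formulas $a_kb_k=|X_{i_k}|$, etc., depend only on $(\vec i,\vec j,\vec l)$'s marginals), but you do not single it out as the crux; it is worth doing so, since without condition~(1) the blocks selected by Salem--Spencer would degenerate to matrix-multiplication tensors of \emph{varying} shapes, and the direct-sum count would not be $N_X^{1-o(1)}$ copies of a single $\langle a,a,a\rangle$.

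One minor point: your ``cubing trick'' is unnecessary here. Because the partition is $T$-symmetric and $p\in P^{sym}(L)$, the products $\prod_k|X_{i_k}|$, $\prod_k|Y_{j_k}|$, $\prod_k|Z_{l_k}|$ are already equal, so $A=B=C$ exactly (not merely up to subexponential error), and each surviving block degenerates directly to the same cube $\langle a,a,a\rangle$.
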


\begin{proof}
Typically, as described in \cite[Section~3]{v12}, there is an additional loss in the size of the degeneration if there are multiple different distributions $p, p'$ with the same marginals (meaning $p(X_i) = p'(X_i)$, $p(Y_j) = p'(Y_j)$, and $p(Z_k) = p'(Z_k)$ for all $i,j,k$) but different values of $V(p) := \prod_{T_{ijk} \in L} V_{\tau}(T_{ijk})^{p(T_{ijk})}$ for any $\tau \in [2/3, 1]$. However, because of condition (1) in the definition of a laser-ready tensor partition, the quantity $V(p)$ is equal to 
$$ \prod_{T_{ijk} \in L} (|X_i| \cdot |Y_j| \cdot |Z_k|)^{p(T_{ijk})\cdot \tau /2},$$
and in particular satisfies $V(p) = V(p')$ for any two distributions $p,p'$ with the same marginals. Thus, we do not incur this loss, and we get the desired degeneration. 
\end{proof}

Our key new result about such tensor partitions is as follows:

\begin{theorem}\label{thmlaseropt}
Suppose tensor $T$, along with the partitions of $X,Y,Z$, is a laser-ready tensor partition. Then, $$\tilde{S}(T) = \limsup_{p \in P^{sym}(L)} p_X.$$
\end{theorem}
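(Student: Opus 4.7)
The proof naturally splits into the two inequalities, with the upper direction being essentially already proven and the lower direction being the substantive content. The upper bound $\tilde{S}(T) \leq \limsup_{p \in P^{sym}(L)} p_X$ is immediate from Proposition~\ref{symmprop}: a laser-ready tensor partition is in particular $T$-symmetric by condition (3) of Definition~\ref{laserable}, so the proposition applies verbatim.

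For the matching lower bound, the plan is to fix an arbitrary $p \in P^{sym}(L)$ and chain together three existing results to show $\tilde{S}(T) \geq p_X$. First, I would invoke Theorem~\ref{laserthm} to obtain, for each positive integer $n$, a degeneration of $T^{\otimes n}$ into $F^{n-o(n)} \odot \langle a, a, a \rangle$, where $F = \prod_{i} p(X_i)^{-p(X_i)}$ and $a = \left(\prod_{T_{ijk}\in L} |X_i|^{p(T_{ijk})}\right)^{n/2 - o(n)}$. Next, Proposition~\ref{propthree} lets me further degenerate each summand $\langle a,a,a\rangle$ into an independent tensor of size at least $0.75 \cdot a^2$, so altogether $T^{\otimes n}$ degenerates into an independent tensor of size at least $0.75 \cdot F^{n-o(n)} \cdot a^2$. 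Combining Propositions~\ref{propone} and~\ref{proptwo} then gives $\sr(T^{\otimes n}) \geq 0.75 \cdot F^{n-o(n)} \cdot a^2$.

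The final step is a short algebraic identity that closes the loop. Since $\sum_{(j,k) : T_{ijk} \in L} p(T_{ijk}) = p(X_i)$ for each $i$, one rewrites
$$a^2 = \left( \prod_{i \in [k_X]} |X_i|^{p(X_i)}\right)^{n - o(n)},$$
so that
$$F^n \cdot a^2 = \left( \prod_{i \in [k_X]} \left( \frac{|X_i|}{p(X_i)}\right)^{p(X_i)} \right)^{n - o(n)} = p_X^{n - o(n)}.$$
Taking $n$-th roots and letting $n \to \infty$ yields $\tilde{S}(T) \geq p_X$, and taking the supremum over $p \in P^{sym}(L)$ finishes the argument.

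I do not anticipate a serious technical obstacle, since Theorem~\ref{laserthm} does the heavy lifting; the interesting content is simply the coincidence that the Laser Method's output quantity, once converted to a slice-rank lower bound via matrix multiplication $\to$ independent tensor degenerations, produces precisely the same expression $p_X$ that governs the upper bound from Proposition~\ref{symmprop}. The subtle point worth being careful about is why Theorem~\ref{laserthm} is lossless here: it is condition (1) of Definition~\ref{laserable} that forces the quantity $V(p)$ appearing in more general Laser-style bounds to depend only on the marginals of $p$, so that no $V_\tau$-loss is incurred and the two bounds can meet.
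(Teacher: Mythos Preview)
Your proposal is correct and follows essentially the same argument as the paper: upper bound via Proposition~\ref{symmprop}, lower bound by invoking Theorem~\ref{laserthm}, then Proposition~\ref{propthree}, then Propositions~\ref{propone}--\ref{proptwo}, and finally observing that the resulting quantity equals $p_X^{n-o(n)}$. The paper is slightly terser (it absorbs the $0.75$ constant and the algebraic rewriting of $a^2$ into the $n-o(n)$ exponent without comment), but the logic is identical.
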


\begin{proof}
The upper bound, $\tilde{S}(T) \leq \limsup_{p \in P^{sym}(L)} p_X$, is given by Proposition~\ref{symmprop}.

For the lower bound, we know from Theorem~\ref{laserthm} that for all $p \in P^{sym}(L)$, and all positive integers $n$, the tensor $T^{\otimes n}$ has a degeneration into $$\left(\prod_{i \in [k_X]} p(X_i)^{-p(X_i)}\right)^{n - o(n)} \odot \langle a,a,a \rangle,$$
where 
$$a = \left( \prod_{T_{ijk} \in L} |X_i| ^{p(T_{ijk})} \right)^{n/2 - o(n)}.$$
By Proposition~\ref{propthree}, this means $T^{\otimes n}$ has a degeneration to an independent tensor of size
$$\left(\prod_{i \in [k_X]} p(X_i)^{-p(X_i)}\right)^{n - o(n)} \cdot a^2 = p_X^{n - o(n)}.$$
Applying Propositions~\ref{propone}~and~\ref{proptwo} implies that $\tilde{S}(T) \geq p_X$ for all $p \in P^{sym}(L)$, as desired.
\end{proof}

\begin{corollary}
The upper bounds on $\tilde{S}(CW_{q,\sigma})$, $\tilde{S}(cw_{q,\sigma})$, $\tilde{S}(T_q^{lower})$, and $\tilde{S}(T_q)$ from Section~\ref{secfour} are tight.
\end{corollary}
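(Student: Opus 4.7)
The plan is straightforward: identify each tensor together with the partition used in Section~\ref{secfour} as a laser-ready tensor partition, and then invoke Theorem~\ref{thmlaseropt}, which gives $\tilde{S}(T) = \limsup_{p \in P^{sym}(L)} p_X$. Since the upper bounds in Section~\ref{secfour} were derived from Proposition~\ref{symmprop} as exactly this $\limsup$, matching lower bounds then follow immediately.

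The first step is to verify the three conditions of Definition~\ref{laserable} for each case. For $CW_{q,\sigma}$ with the three-part partition of Section~\ref{secgcw}, the six nonzero blocks consist of the corners $T_{002}, T_{020}, T_{200}$, each isomorphic to $\langle 1,1,1\rangle$, and the edges $T_{011}, T_{101}, T_{110}$, which are matrix multiplication tensors of shape $\langle 1,1,q\rangle$, $\langle q,1,1\rangle$, and $\langle 1,q,1\rangle$ respectively (the $\sigma$-twist on $T_{110}$ is just a renaming of $y$-variables). These are the maximal matrix multiplication tensors consistent with the part sizes, so (1) holds, and every nonzero block has index sum $2$, so (2) holds. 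The case $cw_{q,\sigma}$ is identical after deleting the three corner blocks. For $T_q^{lower}$ with singleton parts, each nonzero block is a single term $\langle 1,1,1\rangle$; to ensure (2), I would relabel the $Z$-indices via $z'_m := z_{q-1-m}$, after which every nonzero block has index sum $q-1$. Finally, $T_q$ is handled by combining the degeneration $T_q^{lower} \rightsquigarrow T_q$ over $\F_q$ cited in Section~\ref{secfour} with Proposition~\ref{propone} to conclude $\tilde{S}(T_q) \leq \tilde{S}(T_q^{lower})$; the matching lower bound is inherited from that of $T_q^{lower}$.

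Condition (3) is where the main obstacle lies: only $CW_{q,\mathrm{id}}$ and $cw_{q,\mathrm{id}}$ are literally variable-symmetric with $T$-symmetric partitions. For $\sigma \neq \mathrm{id}$, and for $T_q^{lower}$, I would invoke the rotation-symmetrization noted after Definition~\ref{laserable}, passing to $T' := T \otimes rot(T) \otimes rot(rot(T))$ with its induced product partition, which is variable-symmetric with a $T'$-symmetric partition. Theorem~\ref{thmlaseropt} then applies to $T'$, and I would transfer the resulting lower bound on $\tilde{S}(T')$ to a lower bound on $\tilde{S}(T)$ via rotation-invariance of asymptotic slice rank together with a bijection between distributions in $P(L)$ on the original partition and symmetric distributions in $P^{sym}(L')$ on the product partition; this bijection preserves the relevant $p_X$ quantities and so matches the upper bound of Proposition~\ref{symmprop}. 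Once these consistency checks are in place, Theorem~\ref{thmlaseropt} delivers tightness across all four families.
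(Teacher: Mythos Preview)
Your approach is the same as the paper's: verify that each tensor with its Section~\ref{secfour} partition is laser-ready and invoke Theorem~\ref{thmlaseropt}. The verification of conditions (1) and (2) is correct, and your treatment of $T_q$ via the two degenerations is fine.

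The one substantive misstep is your third paragraph. Condition (3) in fact holds directly in every case, so the rotation-symmetrization detour is unnecessary. Recall that variable-symmetric here means $T \simeq rot(T)$ up to an \emph{arbitrary} renaming of variables (Section~\ref{symmetricnotation}), not the stricter coefficient identity from Section~\ref{sliceranksubsec}. For $CW_{q,\sigma}$ (and likewise $cw_{q,\sigma}$), the renaming $y_j \mapsto x'_{\sigma^{-1}(j)}$ on middle indices (identity on $0,q{+}1$), together with $z_k \mapsto y'_k$ and $x_i \mapsto z'_i$, exhibits $rot(CW_{q,\sigma}) \simeq CW_{q,\sigma}$; one checks for instance that $\sum_i y_i z_i x_0 \mapsto \sum_j x'_j y'_{\sigma(j)} z'_0$ and $\sum_i y_{\sigma(i)} z_0 x_i \mapsto \sum_i x'_i y'_0 z'_i$. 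The partitions are then $T$-symmetric since each block isomorphism $T_{jki} \simeq rot(T_{ijk})$ only needs to hold up to renaming. For $T_q^{lower}$, your own relabeling $z'_m := z_{q-1-m}$ already makes the tensor $\sum_{i+j+k=q-1} x_i y_j z'_k$, which is manifestly variable-symmetric with $T$-symmetric singleton partitions. So the paper simply asserts the tensors are laser-ready and applies Theorem~\ref{thmlaseropt} directly.

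Your proposed workaround via $T' = T \otimes rot(T) \otimes rot(rot(T))$ would in principle succeed (since $(T')^{\otimes m} \simeq T^{\otimes 3m}$ gives $\tilde{S}(T') = \tilde{S}(T)^3$), but the ``bijection between distributions preserving $p_X$'' that you invoke would require real work to make precise, and is avoidable once you observe that condition (3) already holds.
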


\begin{proof}
$CW_{q,\sigma}$, $cw_{q,\sigma}$, and $T_q^{lower}$, partitioned as they were in the previous section, are laser-ready tensor partitions. The tight bound for $T_q$ follows from the degeneration to $T_q^{lower}$ described in the previous section.
\end{proof}

\begin{corollary} \label{subrankcorr}
Every tensor $T$ with a laser-ready tensor partition (including $CW_{q,\sigma}$, $cw_{q,\sigma}$, and $T_q^{lower}$) has $\tilde{S}(T) = \tilde{Q}(T)$.
\end{corollary}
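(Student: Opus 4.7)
The plan is to observe that the proof of Theorem~\ref{thmlaseropt} already essentially establishes this corollary, once we track what kind of degeneration the Laser Method actually produces. First I would recall the easy inequality $\tilde{Q}(T) \leq \tilde{S}(T)$, which holds for all tensors $T$ and was noted after Proposition~\ref{proptwo} (it follows by combining Propositions~\ref{propone} and \ref{proptwo}, since a degeneration to an independent tensor of size $q$ cannot increase slice rank, and independent tensors have slice rank equal to their size). So it suffices to prove the reverse inequality $\tilde{Q}(T) \geq \tilde{S}(T)$ for any laser-ready tensor partition.

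For this direction, I would reuse the lower bound half of the proof of Theorem~\ref{thmlaseropt} essentially verbatim, but read it as a statement about $\tilde{Q}$ rather than $\tilde{S}$. Concretely, fix any $p \in P^{sym}(L)$. By Theorem~\ref{laserthm}, for every positive integer $n$ the tensor $T^{\otimes n}$ has a degeneration into
\[
\left(\prod_{i \in [k_X]} p(X_i)^{-p(X_i)}\right)^{n - o(n)} \odot \langle a,a,a \rangle,
\]
with $a = \left(\prod_{T_{ijk} \in L} |X_i|^{p(T_{ijk})}\right)^{n/2 - o(n)}$. Applying Proposition~\ref{propthree} to each copy of $\langle a,a,a\rangle$ converts this (via a further degeneration) into an independent tensor of size
\[
\left(\prod_{i \in [k_X]} p(X_i)^{-p(X_i)}\right)^{n-o(n)} \cdot 0.75 \cdot a^2 \;=\; p_X^{\,n - o(n)}.
\]
By definition of $\tilde{Q}$, this yields $\tilde{Q}(T) \geq p_X$. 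Taking the supremum over $p \in P^{sym}(L)$ and invoking Theorem~\ref{thmlaseropt} gives $\tilde{Q}(T) \geq \limsup_{p \in P^{sym}(L)} p_X = \tilde{S}(T)$, completing the proof.

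There is really no main obstacle here: the nontrivial content is entirely contained in Theorem~\ref{thmlaseropt} and Theorem~\ref{laserthm}, both of which are already proved. The only thing to verify is that the chain of degenerations produced by the Laser Method followed by Strassen's monomial degeneration of matrix multiplication into an independent tensor is itself a degeneration (which is immediate since degenerations compose), so that the lower bound it yields is actually a lower bound on the subrank, not merely on the slice rank. Once that observation is made, the corollary is just the combination of the trivial inequality $\tilde{Q} \leq \tilde{S}$ with the Laser-Method lower bound, which happens to coincide with the slice-rank upper bound of Proposition~\ref{symmprop}.
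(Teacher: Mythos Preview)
Your proposal is correct and takes essentially the same approach as the paper: the paper's proof also notes the general inequality $\tilde{Q}(T) \leq \tilde{S}(T)$ and then observes that the lower-bound half of Theorem~\ref{thmlaseropt} actually produces a degeneration of $T^{\otimes n}$ to an independent tensor of size $\tilde{S}(T)^{n-o(n)}$, which gives $\tilde{Q}(T) \geq \tilde{S}(T)$. You have simply unpacked this observation in more detail.
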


\begin{proof}
All tensors satisfy $\tilde{S}(T) \geq \tilde{Q}(T)$. In Theorem~\ref{thmlaseropt}, the upper bound on $\tilde{S}(T)$ showed that $T^{\otimes n}$ has a degeneration to an independent tensor of size $\tilde{S}(T)^{n - o(n)}$, which implies that $\tilde{Q}(T) \geq \tilde{S}(T)$.
\end{proof}

\begin{corollary}
If $T$ is a tensor with a laser-ready tensor partition, and applying the Laser method to $T$ with this partition yields an upper bound on $\omega$ of $\omega_u(T) \leq c$ for some $c>2$, then $\omega_u(T) > 2$.
\end{corollary}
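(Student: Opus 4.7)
The plan is to argue by contrapositive: assuming $\omega_u(T)=2$, I will show that the Laser method applied to $T$ with its laser-ready partition already attains $\omega \le 2$, contradicting the hypothesis. Throughout, abbreviate $A(p):=\sum_i p(X_i)\log|X_i|$ and $B(p):=-\sum_i p(X_i)\log p(X_i)$, so that $\log p_X = A(p)+B(p)$.

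Since $\tilde{S}(T)\le\tilde{R}(T)$ in general, combining $\omega_u(T)=2$ with Theorem~\ref{thm:symmSOmega} forces $\tilde{S}(T)=\tilde{R}(T)$, and Theorem~\ref{thmlaseropt} then gives $\sup_{p\in P^{sym}(L)} p_X = \tilde{R}(T)$. By compactness this supremum is attained at some $p^*\in P^{sym}(L)$, so $A(p^*)+B(p^*)=\log\tilde{R}(T)$.

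The key step is showing $A(p^*)>0$. Suppose instead $A(p^*)=0$, meaning $p^*$ is supported only on blocks $T_{ijk}$ with $|X_i|=|Y_j|=|Z_k|=1$. Since the hypothesis says the Laser method yields a finite bound on $\omega$, there must exist some block in $L$ involving a non-singleton part (otherwise every degeneration produced by Theorem~\ref{laserthm} has the form $F\odot\langle 1,1,1\rangle$, yielding no finite upper bound at all). Pick a $T$-symmetric orbit $\{T_{ijk},T_{jki},T_{kij}\}\subseteq L$ of such blocks and transfer infinitesimal mass $t$ into it, out of a $p^*$-active orbit. A Taylor expansion yields $\Delta A = t(\log|X_i|+\log|X_j|+\log|X_k|) = \Theta(t)$, while each freshly-occupied part contributes $-tv\log(tv)\sim tv\log(1/t)$ to the entropy, giving $\Delta B\ge C\,t\log(1/t)+O(t)$ for some $C>0$. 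Because $t\log(1/t)\gg t$ as $t\to 0^+$, we get $\Delta\log p_X>0$ for sufficiently small $t$, contradicting maximality of $p^*$.

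With $A(p^*)>0$ in hand, applying Theorem~\ref{laserthm} at $p^*$ supplies, for each $n$, a degeneration $T^{\otimes n}\to F\odot\langle a,a,a\rangle$ with $\log F = nB(p^*)-o(n)$ and $\log a = (n/2)A(p^*)-o(n)$. The asymptotic sum inequality $F\cdot a^{\omega}\le\tilde{R}(T)^{n+o(n)}$ rearranges to $\omega\le 2(\log\tilde{R}(T)-B(p^*))/A(p^*)$, and plugging in $\log\tilde{R}(T)=A(p^*)+B(p^*)$ collapses the right-hand side to exactly $2$, the desired contradiction. The main obstacle is the perturbation step ruling out $A(p^*)=0$: it hinges on the logarithmic divergence of the entropy derivative $-\tfrac{d}{dx}(x\log x)\to+\infty$ at $x=0^+$, which makes the entropy gain dominate the linear cost, and must be executed while respecting the $T$-symmetric constraints defining $P^{sym}(L)$.
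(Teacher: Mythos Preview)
Your proof is correct and follows the same core idea as the paper: the Laser bound at a distribution $p$ equals $2$ precisely when $\log p_X=\log\tilde{R}(T)$, so the Laser Method reaches $2$ iff $\tilde{S}(T)=\sup_p p_X=\tilde{R}(T)$, and then Theorem~\ref{thm:symmSOmega} (or its contrapositive) finishes. The paper argues this directly in three lines; you run the contrapositive.

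The one genuine difference is your treatment of the case $A(p^*)=0$. The paper's sentence ``this yields $\omega_u(T)=2$ if and only if $p_X=\tilde{R}(T)$'' tacitly assumes the optimizing distribution has $a>1$, i.e.\ $A(p)>0$; otherwise the asymptotic sum inequality $F\cdot a^{\omega}\ge\tilde{R}(T)^n$ gives no bound on $\omega$ at all, and one cannot read off $p_X<\tilde{R}(T)$ for such $p$. Your perturbation argument closes this gap: if $A(p^*)=0$ then $p^*$ is supported on all-singleton blocks, so any non-singleton part $X_i$ has $p^*(X_i)=0$, and shifting mass $t$ into an orbit touching $X_i$ gains $\Theta(t\log(1/t))$ in entropy against only $O(t)$ elsewhere, contradicting maximality. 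This is a small but legitimate addition; the paper's proof is more concise but skates over this boundary case.
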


\begin{proof}
When the Laser method shows, as in Theorem~\ref{laserthm}, that $T^{\otimes n}$ has a degeneration into $$\left(\prod_{i \in [k_X]} p(X_i)^{-p(X_i)}\right)^{n - o(n)} \odot \langle a,a,a \rangle,$$
the resulting upper bound on $\omega_u(T)$ is that $$\left(\prod_{i \in [k_X]} p(X_i)^{-p(X_i)}\right)^{n - o(n)} \cdot a^{\omega_u(T)} \geq \tilde{R}(T)^n.$$
In particular, since the left-hand side equals $p_X$ when $\omega_u(T)=2$, this yields $\omega_u(T)=2$ if and only if $p_X = \tilde{R}(T)$, so if it yields $\omega_u(T) \leq c$, then $\tilde{S}(T) = p_X < \tilde{R}(T)^{1-\delta}$ for some $\delta>0$. Combined with Theorem~\ref{thm:SROmega} or Theorem~\ref{thm:symmSOmega}, this means that $\omega_u(T)>2$.
\end{proof}

\section*{Acknowledgements} I would like to thank Matthias Christandl, Joshua Grochow, Ryan Williams, Virginia Vassilevska Williams, and Jeroen Zuiddam for helpful discussions and suggestions.

\bibliographystyle{alpha}
\bibliography{papers}

\appendix
\section{Proof that $\omega_u(CW_{q,\sigma}) \geq 2.16805$ for all $q$} \label{appone}

Define the function $f : [0,1/3] \to \R$ by
$$f(v) := \frac{1}{v^v (2/3 - 2v)^{2/3 - 2v} (1/3 + v)^{1/3+v}}.$$
In Section~\ref{secgcw}, we showed that $$\omega_u(CW_{q,\sigma}) \geq \min_{v \in [0,1/3]} 2\frac{\log(q+2) }{ \log(q^{2/3 - 2v} \cdot f(v))}.$$
The value of this optimization problem is computed for $1 \leq q \leq 8$ in a table in Section~\ref{secgcw}, where we see that $\omega_u(CW_{q,\sigma}) \geq 2.16805$ for all $q \leq 8$.

Let $v_q$ denote the argmin for the optimization problem. In particular, for $q=8$, the argmin is $v_8  = 0.017732422\ldots$. From the $q^{2/3 - 2v}$ term in the optimization problem, we see that $v_{q+1} \leq v_q$ for all $q$, and in particular, $v_q \leq v_8$ for all $q > 8$. It follows that $f(v_q) \leq f(v_8) = 2.07389\ldots$ for all $q>8$. Thus, for all $q>8$ we have:
$$\omega_u(CW_{q,\sigma}) \geq \min_{v \in [0,1/3]} 2\frac{\log(q+2) }{ \log(q^{2/3 - 2v} \cdot f(v_8))} = 2\frac{\log(q+2) }{ \log(q^{2/3} \cdot f(v_8))}.$$
This expression equals $2.18562\ldots$ at $q=9$, and is easily seen to be increasing with $q$ for $q>9$, which implies as desired that  $\omega_u(CW_{q,\sigma}) \geq 2.16805$ for all $q \geq 9$ and hence all $q$.

\end{document}